\newcommand{\fref}[1]{Fig.~\ref{#1}}
\newcommand{\tref}[1]{Table~\ref{#1}}
\newcommand{\sref}[1]{Section~\ref{#1}}
\providecommand{\U}[1]{\protect\rule{.1in}{.1in}}
\newtheorem{theorem}{Theorem}
\newtheorem{definition}{Definition}
\newtheorem{lemma}{Lemma}
\newtheorem{problem}{Problem}
\newtheorem{proposition}{Proposition}
\newenvironment{proof}[1][Proof]{\textbf{#1.} }{\ \rule{0.5em}{0.5em}}
\begin{document}

\title{Theory of Noise-Scaled Stability Bounds and Entanglement Rate Maximization in the Quantum Internet}
\author{Laszlo Gyongyosi\thanks{School of Electronics and Computer Science, University of Southampton, Southampton SO17 1BJ, U.K., and Department of Networked Systems and Services, Budapest University of Technology and Economics, 1117 Budapest, Hungary, and MTA-BME Information Systems Research Group, Hungarian Academy of Sciences, 1051 Budapest, Hungary.}
\and Sandor Imre\thanks{Department of Networked Systems and Services, Budapest University of Technology and Economics, 1117 Budapest, Hungary.}
}

\date{}

\maketitle
\begin{abstract}
Crucial problems of the quantum Internet are the derivation of stability properties of quantum repeaters and theory of entanglement rate maximization in an entangled network structure. The stability property of a quantum repeater entails that all incoming density matrices can be swapped with a target density matrix. The strong stability of a quantum repeater implies stable entanglement swapping with the boundness of stored density matrices in the quantum memory and the boundness of delays. Here, a theoretical framework of noise-scaled stability analysis and entanglement rate maximization is conceived for the quantum Internet. We define the term of entanglement swapping set that models the status of quantum memory of a quantum repeater with the stored density matrices. We determine the optimal entanglement swapping method that maximizes the entanglement rate of the quantum repeaters at the different entanglement swapping sets as function of the noise of the local memory and local operations. We prove the stability properties for non-complete entanglement swapping sets, complete entanglement swapping sets and perfect entanglement swapping sets. We prove the entanglement rates for the different entanglement swapping sets and noise levels. The results can be applied to the experimental quantum Internet.
\end{abstract}

\section{Introduction}
\label{sec1}
The quantum Internet allows legal parties to perform networking based on the fundamentals of quantum mechanics \cite{ref1,ref2,ref3,ref11,ref13a,ref13,ref18,refn7,puj1,puj2,pqkd1,np1}. The connections in the quantum Internet are formulated by a set of quantum repeaters and the legal parties have access to large-scale quantum devices \cite{ref5,ref6,ref7,qmemuj,ref4} such as quantum computers \cite{qc1,qc2,qc3,qc4,qc5,qc6,qcadd1,qcadd2,qcadd3,qcadd4,shor1,refibm}. Quantum repeaters are physical devices with quantum memory and internal procedures \cite{ref5,ref6,ref7,ref11,ref13a,ref13,ref8,ref9,ref10,add1,add2,add3,refqirg,ref18,ref19,ref20,ref21,add4,refn7,refn5,refn3,sat,telep,refn1,refn2,refn4,refn6}. An aim of the quantum repeaters is to generate the entangled network structure of the quantum Internet via entanglement distribution \cite{ref23,ref24,ref25,ref26,ref27, nadd1,nadd2,nadd3,nadd4,nadd5,nadd6,nadd7,kris1,kris2}. The entangled network structure can then serve as the core network of a global-scale quantum communication network with unlimited distances (due to the attributes of the entanglement distribution procedure). Quantum repeaters share entangled states over shorter distances; the distance can be extended by the entanglement swapping operation in the quantum repeaters \cite{refn7,ref1,ref5,ref6,ref7,ref13,ref13a}. The swapping operation takes an incoming density matrix and an outgoing density matrix; both density matrices are stored in the local quantum memory of the quantum repeater \cite{ref45,ref46,ref47,ref48,ref49,ref50,ref51,ref52,ref53,ref54,ref55,ref56,ref57,ref58,ref60,ref61,ref62
}. The incoming density matrix is half of an entangled state such that the other half is stored in the distant source node, while the outgoing density matrix is half of an entangled state such that the other half is stored in the distant target node. The entanglement swapping operation, applied on the incoming and outgoing density matrices in a particular quantum repeater, entangles the distant source and target quantum nodes. Crucial problems here are the size and delay bounds connected to the local quantum memory of a quantum repeater and the optimization of the swapping procedure such that the entanglement rate of the quantum repeater (outgoing entanglement throughput measured in entangled density matrices per a preset time unit) is maximal. These questions lead us to the necessity of strictly defining the fundamental stability and performance criterions \cite{ref36,ref37,ref38,ref39,ref40,ref41,ref42,ref43,ref44} of quantum repeaters in the quantum Internet. 

Here, a theoretical framework of noise-scaled stability analysis and entanglement rate maximization is defined for the quantum Internet. By definition, the stability of a quantum repeater can be weak or strong. The strong stability implies weak stability, by some fundamentals of queueing theory \cite{refL1,refL2,refL3,refL4,refD1}. Weak stability of a quantum repeater entails that all incoming density matrices can be swapped with a target density matrix. Strong stability of a quantum repeater further guarantees the boundness of the number of stored density matrices in the local quantum memory. The defined system model of a quantum repeater assumes that the incoming density matrices are stored in the local quantum memory of the quantum repeater. The stored density matrices formulate the set of incoming density matrices (input set). The quantum memory also consists of a separate set for the outgoing density matrices (output set). Without loss of generality, the cardinality of the input set (number of stored density matrices) is higher than the cardinality of the output set. Specifically, the cardinality of the input set is determined by the entanglement throughput of the input connections, while the cardinality of the output set equals the number of output connections. Therefore, if in a given swapping period, the number of incoming density matrices exceeds the cardinality of the output set, then several incoming density matrices must be stored in the input set (Note: The logical model of the storage mechanisms of entanglement swapping in a quantum repeater is therefore analogous to the logical model of an input-queued switch architecture \cite{refL1,refL2,refL3}.). The aim of entanglement swapping is to select the density matrices from the input and output sets, such that the outgoing entanglement rate of the quantum repeater is maximized; this also entails the boundness of delays. The maximization procedure characterizes the problem of optimal entanglement swapping in the quantum repeaters. 

Finding the optimal entanglement swapping means determining the entanglement swapping between the incoming and outgoing density matrices that maximizes the outgoing entanglement rate of the quantum repeaters. The problem of entanglement rate maximization must be solved for a particular noise level in the quantum repeater and with the presence of various entanglement swapping sets. The noise level in the proposed model is analogous to the lost density matrices in the quantum repeater due to imperfections in the local operations and errors in the quantum memory units. The entanglement swapping sets are logical sets that represent the actual state of the quantum memory in the quantum repeater. The entanglement swapping sets are formulated by the set of received density matrices stored in the local quantum memory and the set of outgoing density matrices, which are also stored in the local quantum memory. Each incoming and outgoing density matrix represent half of an entangled system, such that the other half of an incoming density matrix is stored in the distant source quantum repeater, while the other half of an outgoing density matrix is stored in the distant target quantum repeater. The aim of determining the optimal entanglement swapping method is to apply the local entanglement swapping operation on the set of incoming and outgoing density matrices such that the outgoing entanglement rate of the quantum repeater is maximized at a particular noise level. As we prove, the entanglement rate maximization procedure depends on the type of entanglement swapping sets formulated by the stored density matrices in the quantum memory. We define the logical types of the entanglement swapping sets and characterize the main attributes of the swapping sets. We present the efficiency of the entanglement swapping procedure as a function of the local noise and its impacts on the entanglement rate. We prove that the entanglement swapping sets can be defined as a function of the noise, which allows us to define noise-scaled entanglement swapping and noise-scaled entanglement rate maximization. The proposed theoretical framework utilizes the fundamentals of queueing theory, such as the Lyapunov methodology \cite{refL1}, which is an analytical tool used to assess the performance of queueing systems \cite{refL1,refL2,refL3,refL4,refD1,refs3}, and defines a fusion of queueing theory with quantum Shannon theory \cite{ref4,ref22,ref28,ref29,ref30,ref32,ref33,ref34,ref35} and the theory of quantum Internet.

The novel contributions of our manuscript are as follows:
\begin{enumerate}
\item  We define a theoretical framework of noise-scaled entanglement rate maximization for the quantum Internet. 

\item  We determine the optimal entanglement swapping method that maximizes the entanglement rate of a quantum repeater at the different entanglement swapping sets as a function of the noise level of the local memory and local operations.

\item  We prove the stability properties for non-complete entanglement swapping sets, complete entanglement swapping sets and perfect entanglement swapping sets. 

\item  We prove the entanglement rate of a quantum repeater as a function of the entanglement swapping sets and the noise level.
\end{enumerate}

This paper is organized as follows. In \sref{sec2}, the preliminary definitions are discussed. \sref{sec3} proposes the noise-scaled stability analysis. In \sref{sec4}, the noise-scaled entanglement rate maximization is defined. \sref{sec5} provides a performance evaluation. Finally, \sref{sec6} concludes the results. Supplemental information is included in the Appendix.

\section{System Model and Problem Statement}
\label{sec2}
\subsection{System Model}
Let $V$ refer to the nodes of an entangled quantum network $N$, which consists of a transmitter node $A\in V$, a receiver node $B\in V$, and quantum repeater nodes $R_{i} \in V$, $i=1,\ldots ,q$. Let $E=\left\{E_{j} \right\}$, $j=1,\ldots ,m$ refer to a set of edges (an edge refers to an entangled connection in a graph representation) between the nodes of $V$, where each $E_{j} $ identifies an ${\rm L}_{l} $-level entanglement, $l=1,\ldots ,r$, between quantum nodes $x_{j} $ and $y_{j} $ of edge $E_{j} $, respectively. Let $N=\left(V,{\rm {\mathcal S}}\right)$ be an actual quantum network with $\left|V\right|$ nodes and a set ${\rm {\mathcal S}}$ of entangled connections. An ${\rm L}_{l} $-level, $l=1,\ldots ,r$, entangled connection $E_{{\rm L}_{l} } \left(x,y\right)$, refers to the shared entanglement between a source node $x$ and a target node $y$, with hop-distance 
\begin{equation} \label{ZEqnNum202142} 
d\left(x,y\right)_{{\rm L}_{l} } =2^{l-1} ,                                               
\end{equation} 
since the entanglement swapping (extension) procedure doubles the span of the entangled pair in each step. This architecture is also referred to as the doubling architecture \cite{ref1,ref5,ref6,ref7}. 

For a particular ${\rm L}_{l} $-level entangled connection $E_{{\rm L}_{l} } \left(x,y\right)$ with hop-distance \eqref{ZEqnNum202142}, there are $d\left(x,y\right)_{{\rm L}_{l} } -1$ intermediate nodes between the quantum nodes $x$ and $y$.

\fref{fig1} depicts a quantum Internet scenario with an intermediate quantum repeater $R_{j} $. The aim of the quantum repeater is to generate long-distance entangled connections between the distant quantum repeaters. The long-distance entangled connections are generated by the $U_{S} $ entanglement swapping operation applied in $R_{j} $. The quantum repeater must manage several different connections with heterogeneous entanglement rates. The density matrices are stored in the local quantum memory of the quantum repeater. The aim is to find an entanglement swapping in $R_{j} $ that maximizes the entanglement rate of the quantum repeater.

\begin{center}
\begin{figure*}[!htbp]
\begin{center}
\includegraphics[angle = 0,width=1\linewidth]{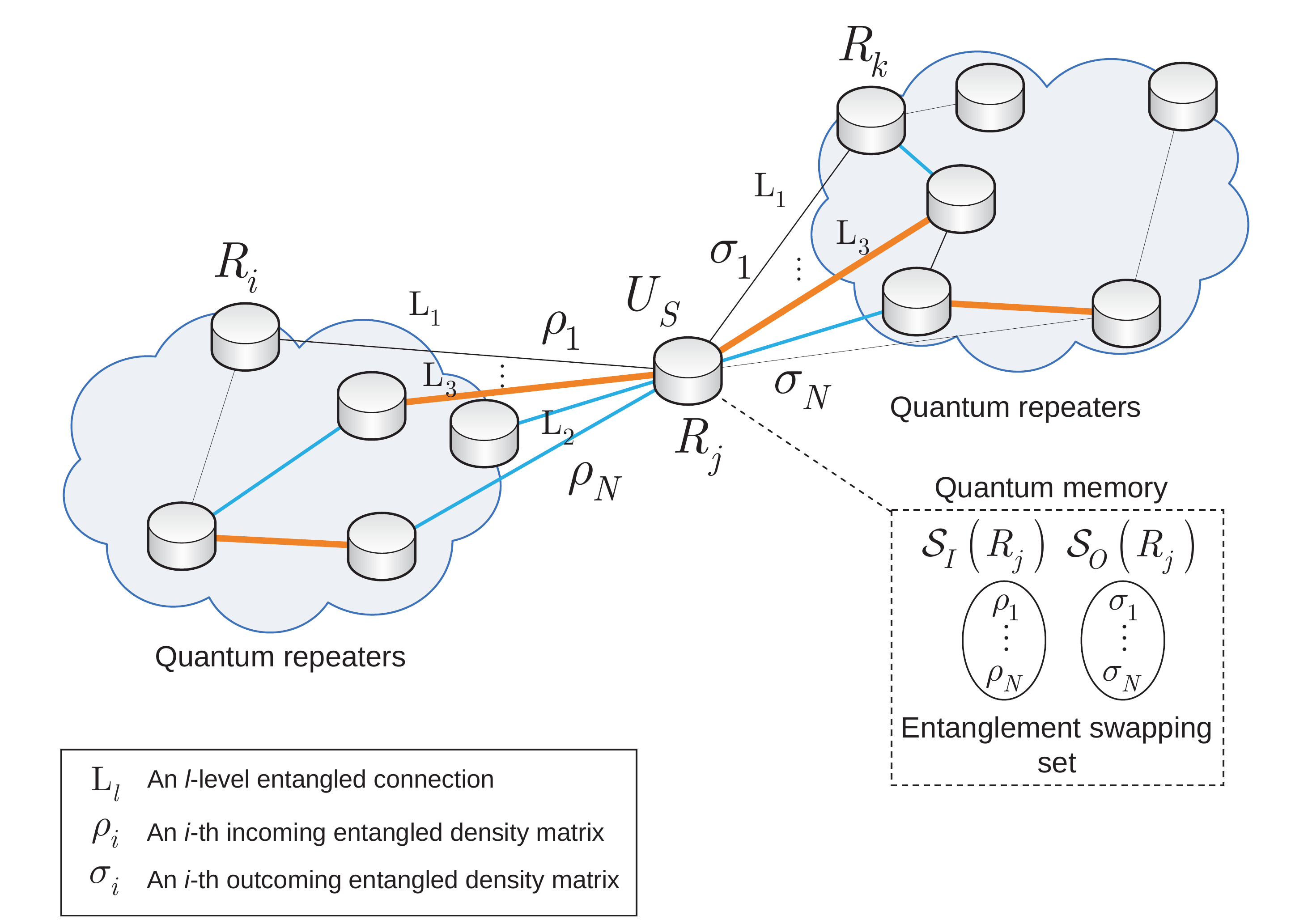}
\caption{The problem of entanglement swapping in quantum repeater $R_{j} $ with $N$ input and $N$ output connections in a quantum Internet scenario. Quantum repeater $R_{j} $ stores an $\rho _{i} $ incoming entangled density matrix from the $i$-th input (the other half of $\rho _{i} $ is shared with a source quantum repeater $R_{i} $) and the $\sigma _{k} $ outgoing entangled density matrix (the other half of $\sigma _{k} $ is shared with a target quantum repeater $R_{k} $) in its local quantum memory. The $U_{S} $ entanglement swapping operation in $R_{j} $ generates long-distance entangled connections between the distant quantum nodes. The incoming and outgoing density matrices formulate sets ${\rm {\mathcal S}}_{I} \left(R_{j} \right)$ and ${\rm {\mathcal S}}_{O} \left(R_{j} \right)$ together formulate the entanglement swapping set. The aim of the optimization procedure is to determine the optimal entanglement swapping to maximize the outgoing entanglement rate of $R_{j} $.)} 
 \label{fig1}
 \end{center}
\end{figure*}
\end{center}

\subsubsection{Entanglement Fidelity}
The aim of the entanglement distribution procedure is to establish a $d$-dimensional entangled system between the distant points $A$ and $B$, through the intermediate quantum repeater nodes. Let $d=2$, and let ${\left| \beta _{00}  \right\rangle} $ be the target entangled system $A$ and $B$, ${\left| \beta _{00}  \right\rangle} =\frac{1}{\sqrt{2} } \left({\left| 00 \right\rangle} +{\left| 11 \right\rangle} \right),$ subject to be generated. At a particular density $\sigma $ generated between $A$ and $B$, the fidelity of $\sigma $ is evaluated as
\begin{equation} \label{ZEqnNum728497} 
F=\left\langle  {{\beta }_{00}} | \sigma |{{\beta }_{00}} \right\rangle ,
\end{equation} 
Without loss of generality, an aim of a practical entanglement distribution is to reach $F\ge 0.98$ in \eqref{ZEqnNum728497} for a given $\sigma $ \cite{ref1,ref2,ref3,ref4,ref5,ref6,ref7,ref8}.
\subsubsection{Entanglement Purification and Entanglement Throughput}
Entanglement purification \cite{purifuj1,purifuj2,purifuj3} is a probabilistic procedure that creates a higher fidelity entangled system from two low-fidelity Bell states. The entanglement purification procedure yields a Bell state with an increased entanglement fidelity $F'$, 
\begin{equation} \label{3)} 
F_{in} <F' \le 1,                                                 
\end{equation} 
where $F_{in} $ is the fidelity of the imperfect input Bell pairs. The purification requires the use of two-way classical communications \cite{ref1,ref2,ref3,ref4,ref5,ref6,ref7,ref8}.

Let $B_{F} (E_{{\rm L}_{l} }^{i})$ refer to the entanglement throughput of a given ${\rm L}_{l} $ entangled connection $E_{{\rm L}_{l} }^{i} $ measured in the number of $d$-dimensional entangled states established over $E_{{\rm L}_{l} }^{i} $ per sec at a particular fidelity $F$ (dimension of a qubit system is $d=2$) \cite{ref1,ref2,ref3,ref4,ref5,ref6,ref7,ref8}. 

For any entangled connection $E_{{\rm L}_{l} }^{i} $, a condition $c$ should be satisfied, as
\begin{equation} \label{ZEqnNum801212}
c:{{B}_{F}}( E_{{{\text{L}}_{l}}}^{i})\ge {B}_{F}^{\text{*}}( E_{{{\text{L}}_{l}}}^{i}),\text{ for }\forall i,
\end{equation} 
where ${{B}}_{F}^{\text{*}}( E_{{{\text{L}}_{l}}}^{i})$ is a critical lower bound on the entanglement throughput at a particular fidelity $F$ of a given $E_{{{\text{L}}_{l}}}^{i}$, i.e., ${{B}_{F}}( E_{{{\text{L}}_{l}}}^{i})$ of a particular $E_{{{\text{L}}_{l}}}^{i}$ has to be at least ${B}_{F}^{\text{*}}( E_{{{\text{L}}_{l}}}^{i})$.

\subsection{Definitions}
Some preliminary definitions for the proposed model are as follows.
\begin{definition}(Incoming and outgoing density matrix). In a $j$-th quantum repeater $R_{j} $, an $\rho $ incoming density matrix is half of an entangled state $\left| {{\beta }_{00}} \right\rangle =\tfrac{1}{\sqrt{2}}\left( \left| 00 \right\rangle +\left| 11 \right\rangle  \right)$ received from a previous neighbor node $R_{j-1} $. The $\sigma $ outgoing density matrix in $R_{j} $ is half of an entangled state ${\left| \beta _{00}  \right\rangle} $ shared with a next neighbor node $R_{j+1} $.
\end{definition}
\begin{definition}(Entanglement Swapping Operation). The $U_{S} $ entanglement swapping operation is a local transformation in a $j$-th quantum repeater $R_{j}$  that swaps an incoming density matrix $\rho $ with an outgoing density matrix $\sigma $ and measures the density matrices to entangle the distant source and target nodes $R_{j-1} $ and $R_{j+1} $. 
\end{definition}
\begin{definition}(Entanglement Swapping Period). Let $C$ be a cycle with time $t_{C} ={1\mathord{\left/ {\vphantom {1 f_{C} }} \right. \kern-\nulldelimiterspace} f_{C} } $ determined by the $o_{C} $ oscillator in node $R_{j} $, where $f_{C} $ is the frequency of $o_{C} $. Then, let $\pi _{S} $ be an entanglement swapping period in which the set ${\rm {\mathcal S}}_{I} \left(R_{j} \right)=\bigcup _{i}\rho _{i}  $ of incoming density matrices is swapped via $U_{S} $ with the set ${\rm {\mathcal S}}_{O} \left(R_{j} \right)=\bigcup _{i}\sigma _{i}  $ of outgoing density matrices, defined as $\pi _{S} =xt_{C} $, where $x$ is the number of $C$.  
\end{definition}
\begin{definition}(Complete and Non-Complete Swapping Sets). Set ${\rm {\mathcal S}}_{I} \left(R_{j} \right)$ formulates a complete set ${\rm {\mathcal S}}_{I}^{*} \left(R_{j} \right)$ if set ${\rm {\mathcal S}}_{I} \left(R_{j} \right)$ contains all the $Q=\sum _{i=1}^{N}\left|B_{i} \right| $ incoming density matrices per $\pi _{S} $ that is received by $R_{j} $ during a swapping period, where $N$ is the number of input entangled connections of $R_{j} $ and $\left|B_{i} \right|$ is the number of incoming densities of the $i$-th input connection per $\pi _{S} $; thus, ${\rm {\mathcal S}}_{I} \left(R_{j} \right)=\bigcup _{i=1}^{Q}\rho _{i}  $ and $\left|{\rm {\mathcal S}}_{I} \left(R_{j} \right)\right|=Q$. Set ${\rm {\mathcal S}}_{O} \left(R_{j} \right)$ formulates a complete set ${\rm {\mathcal S}}_{O}^{*} \left(R_{j} \right)$ if ${\rm {\mathcal S}}_{O} \left(R_{j} \right)$ contains all the $N$ outgoing density matrices that are shared by $R_{j} $ during a swapping period $\pi _{S} $; thus, ${\rm {\mathcal S}}_{O} \left(R_{j} \right)=\bigcup _{i=1}^{N}\sigma _{i}  $ and $\left|{\rm {\mathcal S}}_{O} \left(R_{j} \right)\right|=N$. 

Let ${\rm {\mathcal S}}\left(R_{j} \right)$ be an entanglement swapping set of $R_{j} $, defined as 
\begin{equation} \label{1)} 
{\rm {\mathcal S}}\left(R_{j} \right)={\rm {\mathcal S}}_{I} \left(R_{j} \right)\bigcup {\rm {\mathcal S}}_{O} \left(R_{j} \right) .   
\end{equation} 
Then, ${\rm {\mathcal S}}\left(R_{j} \right)$ is a complete swapping ${\rm {\mathcal S}}^{*} \left(R_{j} \right)$ set, if 
\begin{equation} \label{2)} 
{\rm {\mathcal S}}^{*} \left(R_{j} \right)={\rm {\mathcal S}}_{I}^{*} \left(R_{j} \right)\bigcup {\rm {\mathcal S}}_{O}^{*} \left(R_{j} \right) ,   
\end{equation} 
with cardinality
\begin{equation} \label{3)} 
\left|{\rm {\mathcal S}}^{*} \left(R_{j} \right)\right|=Q+N.   
\end{equation} 
Otherwise, ${\rm {\mathcal S}}\left(R_{j} \right)$ formulates a non-complete swapping set ${\rm {\mathcal S}}\left(R_{j} \right)$, with cardinality
\begin{equation} \label{4)} 
\left|{\rm {\mathcal S}}\left(R_{j} \right)\right|<Q+N.  
\end{equation} 
\end{definition}
\begin{definition}(Perfect Swapping Sets). A complete swapping set ${\rm {\mathcal S}}^{*} \left(R_{j} \right)$ is a perfect swapping set 
\begin{equation}\label{5)} 
\hat{{\rm {\mathcal S}}}\left(R_{j} \right)=\hat{{\rm {\mathcal S}}}_{I} \left(R_{j} \right)\bigcup \hat{{\rm {\mathcal S}}}_{O} \left(R_{j} \right)
\end{equation}
at a given $\pi _{S} $, if 
\begin{equation} \label{6)} 
\left|\hat{{\rm {\mathcal S}}}\left(R_{j} \right)\right|=N+N 
\end{equation} 
holds for the cardinality.
\end{definition}
\begin{definition}(Coincidence set). In a given $\pi _{S} $, the coincidence set ${\rm {\mathcal S}}_{R_{j} }^{\left(\pi _{S} \right)} \left(\left(R_{i} ,\sigma _{k} \right)\right)$ is a subset of incoming density matrices in ${\rm {\mathcal S}}_{I} \left(R_{j} \right)$ of $R_{j} $ received from $R_{i} $ that requires the outgoing density matrix $\sigma _{k} $ from ${\rm {\mathcal S}}_{O} \left(R_{j} \right)$ for the entanglement swapping. The cardinality of the coincidence set is
\begin{equation} \label{7)} 
Z_{R_{j} }^{\left(\pi _{S} \right)} \left(\left(R_{i} ,\sigma _{k} \right)\right)=\left|{\rm {\mathcal S}}_{R_{j} }^{\left(\pi _{S} \right)} \left(\left(R_{i} ,\sigma _{k} \right)\right)\right|.   
\end{equation} 
\end{definition}
\begin{definition}(Coincidence set increment in an entanglement swapping period) Let $\left|B\left(R_{i} \left(\pi _{S} \right),\sigma _{k} \right)\right|$ refer to the number of density matrices arriving from $R_{i} $ for swapping with $\sigma _{k} $ at $\pi _{S} $. This means the increment of the $Z_{R_{j} }^{\left(\pi '_{S} \right)} \left(\left(R_{i} ,\sigma _{k} \right)\right)$ cardinality of the coincidence set is
\begin{equation} \label{8)} 
Z_{R_{j} }^{\left(\pi '_{S} \right)} \left(\left(R_{i} ,\sigma _{k} \right)\right)=Z_{R_{j} }^{\left(\pi _{S} \right)} \left(\left(R_{i} ,\sigma _{k} \right)\right)+\left|B\left(R_{i} \left(\pi _{S} \right),\sigma _{k} \right)\right|,  
\end{equation} 
where $\pi '_{S} $ is the next entanglement swapping period. The derivations assume that an incoming density matrix $\rho $ chooses a particular output density matrix $\sigma $ for the entanglement swapping with probability $\Pr \left(\rho ,\sigma \right)=x\ge 0$ (Bernoulli i.i.d.).
\end{definition}
\begin{definition}(Incoming and outgoing entanglement rate) Let $\left|B_{R_{i} } \left(\pi _{S} \right)\right|$ be the incoming entanglement rate of $R_{j} $ per a given $\pi _{S} $, defined as
\begin{equation} \label{9)} 
\left|B_{R_{j} } \left(\pi _{S} \right)\right|=\sum _{i,k}\left|B\left(R_{i} \left(\pi _{S} \right),\sigma _{k} \right)\right| ,  
\end{equation} 
where $\left|B\left(R_{i} \left(\pi _{S} \right),\sigma _{k} \right)\right|$ refers to the number of density matrices arriving from $R_{i} $ for swapping with $\sigma _{k} $ per $\pi _{S} $.

Then, at a given $\left|B_{R_{i} } \left(\pi _{S} \right)\right|$, the $\left|B'_{R_{j} } \left(\pi _{S} \right)\right|$, the outgoing entanglement rate of $R_{j} $ is defined as
\begin{equation} \label{10)} 
\left|B'_{R_{j} } \left(\pi _{S} \right)\right|=\left(1-{\tfrac{L}{N}} \right){\tfrac{1}{1+D\left(\pi _{S} \right)}} \left(\left|B_{R_{j} } \left(\pi _{S} \right)\right|\right), 
\end{equation} 
where $L$ is the loss, $0<L\le N$, and $D\left(\pi _{S} \right)$ is a delay measured in entanglement swapping periods caused by the optimal entanglement swapping at a particular entanglement swapping set.
\end{definition}
\begin{definition}(Swapping constraint). In a given $\pi _{S} $, each incoming density in ${\rm {\mathcal S}}_{I} \left(R_{j} \right)$ can be swapped with at most one outgoing density, and only one outgoing density is available in ${\rm {\mathcal S}}_{O} \left(R_{j} \right)$ for each outgoing entangled connection.
\end{definition}
\begin{definition}(Weak stable (stable) and strongly stable entanglement swapping). Weak stability (stability) of a quantum repeater $R_{j}$ entails that all incoming density matrices can be swapped with a target density matrix in $R_{j}$. 
A $\zeta \left(\pi _{S} \right)$ entanglement swapping in $R_{j}$ is weak stable (stable) if, for every $\varepsilon >0$, there exists a $B>0$, such that
\begin{equation} \label{ZEqnNum415455} 
\mathop{\lim }\limits_{\pi _{S} \to \infty } \Pr \left(\left|{\rm {\mathcal S}}_{I}^{\left(\pi _{S} \right)} \left(R_{j} \right)\right|>B\right)<\varepsilon ,  
\end{equation} 
where ${\rm {\mathcal S}}_{I}^{\left(\pi _{S} \right)} \left(R_{j} \right)$ is the set of incoming densities of $R_{j} $ at $\pi _{S} $, while $\left|{\rm {\mathcal S}}_{I}^{\left(\pi _{S} \right)} \left(R_{j} \right)\right|$ is the cardinality of the set. 

For a strongly stable entanglement swapping in $R_{j}$, the weak stability is satisifed and the cardinality of ${\rm {\mathcal S}}_{I}^{\left(\pi _{S} \right)} \left(R_{j} \right)$ is bounded. A $\zeta \left(\pi _{S} \right)$ entanglement swapping in $R_{j} $ is strongly stable if
\begin{equation} \label{ZEqnNum908737} 
\mathop{\lim }\limits_{\pi _{S} \to \infty } \sup {\mathbb{E}}\left(\left|{\rm {\mathcal S}}_{I}^{\left(\pi _{S} \right)} \left(R_{j} \right)\right|\right)<\infty .  
\end{equation} 
\end{definition}

\subsubsection{Noise-Scaled Entanglement Swapping Sets}
\begin{proposition}(Noise Scaled Swapping Sets). Let $\gamma $ be a noise coefficient that models the noise of the local quantum memory and the local operations, $0\le \gamma \le 1$. For $\gamma =0$, the swapping set at a given $\pi _{S} $ is complete swapping set ${\rm {\mathcal S}}^{*} \left(R_{j} \right)$, while for any $\gamma >0$, the swapping set is non-complete swapping set ${\rm {\mathcal S}}\left(R_{j} \right)$ at a given $\pi _{S} $.
\end{proposition}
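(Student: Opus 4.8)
The plan is to turn the qualitative statement into a counting argument on the cardinality of the swapping set, by identifying the noise coefficient $\gamma$ with the per-matrix loss induced by the noisy local quantum memory and the noisy local operation $U_{S}$ during a single period $\pi_{S}$. First I would fix the noise model: each of the $Q=\sum_{i=1}^{N}\left|B_{i}\right|$ incoming density matrices of ${\rm {\mathcal S}}_{I}\left(R_{j}\right)$ and each of the $N$ outgoing density matrices of ${\rm {\mathcal S}}_{O}\left(R_{j}\right)$ is held in the local memory and then processed under a channel parametrized by $\gamma$. A stored matrix stays usable for swapping precisely when its post-noise fidelity $F$ of \eqref{ZEqnNum728497} still clears the swapping threshold ($F\ge 0.98$); I would take $\gamma$ to be exactly the probability that a given matrix fails this test, so that each matrix is retained in the usable set with probability $1-\gamma$ (a Bernoulli trial, consistent with the i.i.d. assumption already imposed on the arrival process).

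Second, for the noiseless case $\gamma=0$ the retention probability equals $1$ for every matrix, so no element is dropped: ${\rm {\mathcal S}}_{I}\left(R_{j}\right)$ still holds all $Q$ incoming matrices and ${\rm {\mathcal S}}_{O}\left(R_{j}\right)$ still holds all $N$ outgoing matrices. These are exactly the complete subsets ${\rm {\mathcal S}}_{I}^{*}\left(R_{j}\right)$ and ${\rm {\mathcal S}}_{O}^{*}\left(R_{j}\right)$, so their union is the complete swapping set ${\rm {\mathcal S}}^{*}\left(R_{j}\right)$ with $\left|{\rm {\mathcal S}}^{*}\left(R_{j}\right)\right|=Q+N$. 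Consistency with the rate law \eqref{10)} is immediate, since $\gamma=0$ forces the loss term to vanish, $L=0$, leaving the outgoing rate limited only by the delay factor $\tfrac{1}{1+D\left(\pi_{S}\right)}$.

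Third, for $\gamma>0$ I would show the cardinality falls strictly below $Q+N$, which by definition makes the set non-complete. The expected number of retained matrices is $\left(1-\gamma\right)\left(Q+N\right)<Q+N$, so in expectation at least $\gamma\left(Q+N\right)>0$ matrices are absent; equivalently, the probability that every matrix survives is $\left(1-\gamma\right)^{Q+N}<1$, so with positive probability (tending to $1$ as the accumulated population grows through the increment recursion \eqref{8)}) at least one matrix is missing and $\left|{\rm {\mathcal S}}\left(R_{j}\right)\right|<Q+N$. Writing the loss as a non-decreasing function $L=L\left(\gamma\right)$ with $L\left(0\right)=0$ and $L\left(\gamma\right)>0$ for $\gamma>0$ (for instance $L=\gamma N$) keeps this quantitatively consistent with \eqref{10)}: positive noise yields a strictly contractive factor $\left(1-L/N\right)<1$ in the outgoing rate.

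The step I expect to be the main obstacle is the third one, namely upgrading ``$\gamma>0$'' to the genuine set-theoretic strict inequality $\left|{\rm {\mathcal S}}\left(R_{j}\right)\right|<Q+N$ rather than a statement that holds only on average. The Bernoulli model yields strict reduction of the \emph{expected} cardinality for free, but non-completeness is stated on the realized cardinality; to close this gap I would either read the swapping set through its mean cardinality per $\pi_{S}$, or argue that as the coincidence sets accumulate across periods the no-loss event has probability $\left(1-\gamma\right)^{Q+N}$ bounded away from $1$, forcing non-completeness almost surely in the long-period limit. Once the identification $L=L\left(\gamma\right)$ with $L\left(0\right)=0$ is fixed, both cases then follow mechanically from the cardinality relations defining complete and non-complete swapping sets.
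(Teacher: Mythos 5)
Your proposal sets out to prove something the paper itself never proves: Proposition~1 appears in the paper without any proof and functions as a modeling postulate, not a theorem. The only text following it is an informal remark that $\gamma$ ``corresponds to the noises and imperfections of the physical devices\dots that lead to the loss of density matrices.'' In other words, the paper \emph{defines} the noise coefficient at the level of the swapping set: $\gamma=0$ is taken to mean no density matrix is lost (hence the set is complete), and $\gamma>0$ is taken to mean loss is present (hence, by the cardinality definitions and the ``double loss'' convention illustrated in Fig.~2, the set is non-complete with $L\ge 1$). There is nothing to derive; the proposition is the convention that ties $\gamma$ to $L$.

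Your attempt to ground this in a Bernoulli per-matrix loss model runs into a gap that you yourself flagged but did not close --- and which in fact cannot be closed within that model. Non-completeness is defined by the \emph{realized} cardinality inequality $\left|{\rm {\mathcal S}}\left(R_{j} \right)\right|<Q+N$ at a given $\pi _{S}$, whereas the proposition asserts this deterministically for every $\gamma >0$. Under independent retention with probability $1-\gamma$ per matrix, the no-loss event has probability $\left(1-\gamma \right)^{Q+N}>0$ in every single period, so with positive probability the realized set is complete even though $\gamma >0$: in your model the implication ``$\gamma >0\Rightarrow$ non-complete'' is not merely hard to establish, it is false on a set of positive probability. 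Neither of your proposed repairs fixes this. Reading the swapping set through its \emph{mean} cardinality silently changes the definition of completeness, and the long-period argument shows only that a loss eventually occurs in \emph{some} period, not that the set at a \emph{given} $\pi _{S}$ is non-complete, which is what the statement claims. The only way to obtain the proposition as stated is to do what the paper does implicitly: take $\gamma >0$ to \emph{mean}, by definition, that at least one density matrix was lost during the period (equivalently $L\ge 1$), rather than interpreting $\gamma$ as a per-matrix loss probability in a stochastic microscopic model.
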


In realistic situations, $\gamma $ corresponds to the noises and imperfections of the physical devices and physical-layer operations (quantum operations, realization of quantum gates, storage errors, losses from local physical devices, optical losses, etc) in the quantum repeater that lead to the loss of density matrices. For further details on the physical-layer aspects of repeater-assisted quantum communications in an experimental quantum Internet setting, we suggest \cite{ref13}.

\fref{fig2} illustrates the perfect swapping set, complete swapping set and non-complete swapping set. For both sets, the incoming densities are stored in incoming set ${\rm {\mathcal S}}_{I} \left(R_{j} \right)$. Its cardinality depends on the incoming entanglement throughputs of the incoming connections. The outgoing set ${\rm {\mathcal S}}_{O} \left(R_{j} \right)$ is a collection of outgoing density matrices. The outgoing matrix is half of an entangled state and the other half is shared with a distant target node.

The input set ${\rm {\mathcal S}}_{I} \left(R_{j} \right)$ and output se ${\rm {\mathcal S}}_{O} \left(R_{j} \right)$ of $R_{j} $ consist of the incoming and outgoing density matrices. For a non-complete entanglement swapping set, the noise is non-zero; therefore, loss is present in the quantum memory. As a convention of our model (see the swapping constraint in Definition 9), any density matrix loss is modeled as a ``double loss'' that affects both sets ${\rm {\mathcal S}}_{I} \left(R_{j} \right)$ and ${\rm {\mathcal S}}_{O} \left(R_{j} \right)$. Because of a loss, the $U_{S} $ swapping operation cannot be performed on the incoming and outgoing density matrices.

\begin{center}
\begin{figure*}[!htbp]
\begin{center}
\includegraphics[angle = 0,width=1\linewidth]{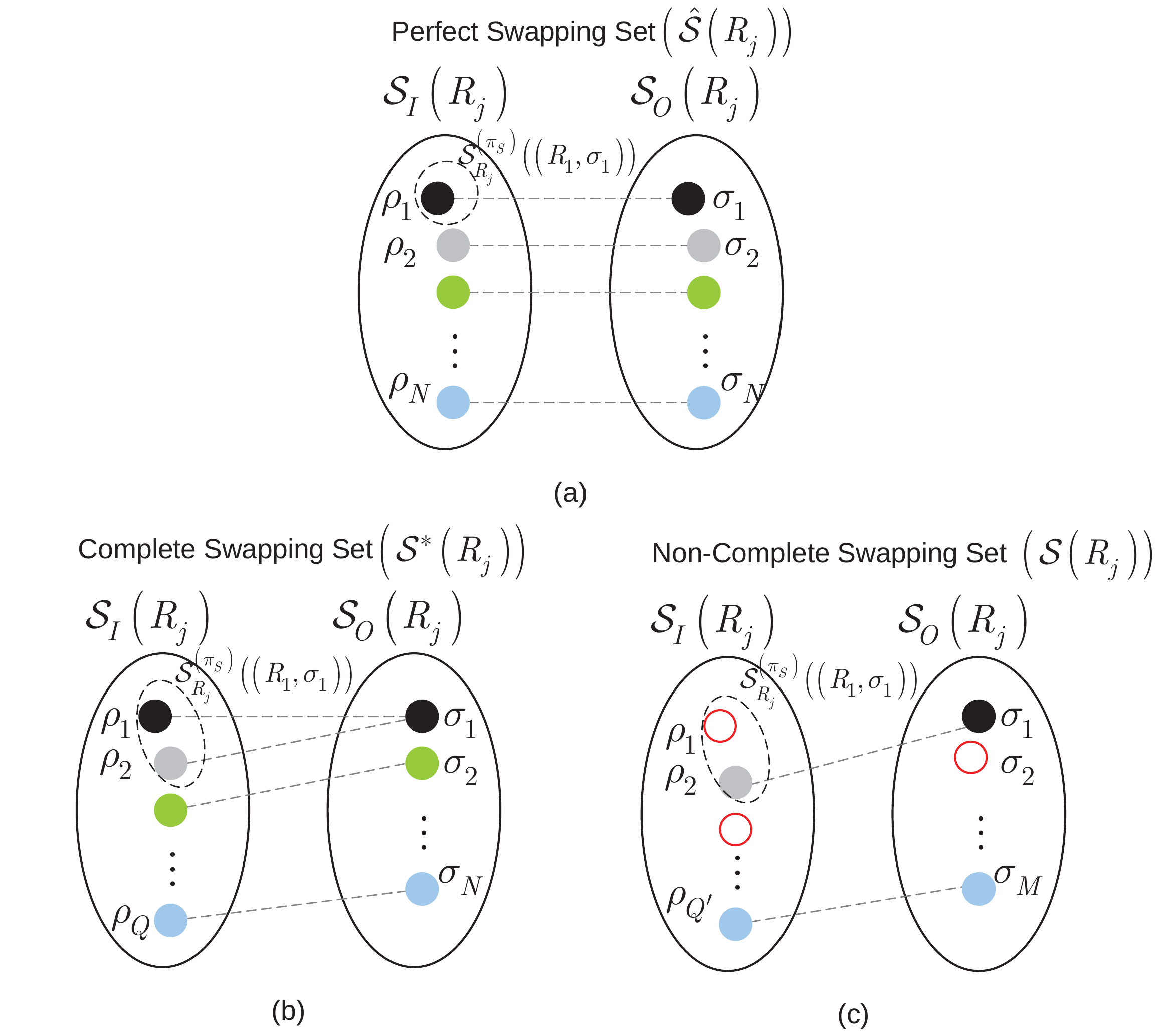}
\caption{Logical types of entanglement swapping sets at a given $\pi _{S} $. (a). For a perfect entanglement swapping set $\hat{{\rm {\mathcal S}}}\left(R_{j} \right)$, the cardinalities of the input and output sets are $\left|\hat{{\rm {\mathcal S}}}_{I} \left(R_{j} \right)\right|=N$ and $\left|\hat{{\rm {\mathcal S}}}_{O} \left(R_{j} \right)\right|=N$; thus, $\left|\hat{{\rm {\mathcal S}}}\left(R_{j} \right)\right|=N+N=2N$. The $Z_{R_{j} }^{\left(\pi _{S} \right)} \left(\left(R_{i} ,\sigma _{k} \right)\right)$ cardinality of the coincidence sets ${\rm {\mathcal S}}_{R_{j} }^{\left(\pi _{S} \right)} \left(\left(R_{i} ,\sigma _{k} \right)\right)$, $i=1,\ldots ,N$, $k=1,\ldots ,N$, is $Z_{R_{j} }^{\left(\pi _{S} \right)} \left(\left(R_{i} ,\sigma _{k} \right)\right)=1$. (b). For a complete entanglement swapping set ${\rm {\mathcal S}}^{*} \left(R_{j} \right)$, $\left|{\rm {\mathcal S}}_{I}^{*} \left(R_{j} \right)\right|=Q>N$ and $\left|{\rm {\mathcal S}}_{O}^{*} \left(R_{j} \right)\right|=N$; thus, $\left|{\rm {\mathcal S}}^{*} \left(R_{j} \right)\right|=Q+N$. The cardinality of the coincidence sets ${\rm {\mathcal S}}_{R_{j} }^{\left(\pi _{S} \right)} \left(\left(R_{i} ,\sigma _{k} \right)\right)$, $i=1,\ldots ,N$ and $k=1,\ldots ,N$ is $Z_{R_{j} }^{\left(\pi _{S} \right)} \left(\left(R_{i} ,\sigma _{k} \right)\right)\ge 1$. (c). For a non-complete entanglement swapping set ${\rm {\mathcal S}}\left(R_{j} \right)$, some densities are randomly lost due to noise (depicted by empty dots) in ${\rm {\mathcal S}}_{I} \left(R_{j} \right)$, leading to $\left|{\rm {\mathcal S}}^{*} \left(R_{j} \right)\right|=Q'+M$, where $Q'\le Q$, and $M=N-L$, where $L$ is the number of lost densities at a given $\pi _{S} $. Since for each ${\rm {\mathcal S}}_{R_{j} }^{\left(\pi _{S} \right)} \left(\left(R_{i} ,\sigma _{k} \right)\right)$ an output $\sigma _{k} $ is associated at a given $\pi _{S} $, in a density loss in a given coincidence set also causes a decrease in the cardinality of the output set ${\rm {\mathcal S}}_{O} \left(R_{j} \right)$ (due to the swapping constraint); thus, $\left|{\rm {\mathcal S}}_{O} \left(R_{j} \right)\right|=M$. The cardinality of the coincidence sets ${\rm {\mathcal S}}_{R_{j} }^{\left(\pi _{S} \right)} \left(\left(R_{i} ,\sigma _{k} \right)\right)$, $i=1,\ldots ,N$ and $k=1,\ldots ,M$ is $Z_{R_{j} }^{\left(\pi _{S} \right)} \left(\left(R_{i} ,\sigma _{k} \right)\right)\ge 0$.} 
 \label{fig2}
 \end{center}
\end{figure*}
\end{center}
 
\subsection{Problem Statement}
The problem formulation for the noise-scaled entanglement rate maximization is given in Problems 1-3. 
\begin{problem}
Determine the entanglement swapping method that maximizes the entanglement rate of a quantum repeater at the different entanglement swapping sets as a function of the noise level of the local memory and local operations.
\end{problem}
\begin{problem}
Prove the stability for non-complete entanglement swapping sets, complete entanglement swapping sets and perfect entanglement swapping sets.
\end{problem}
\begin{problem}
Determine the outgoing entanglement rate of a quantum repeater as a function of the entanglement swapping sets and the noise level.
\end{problem}
\begin{problem}
Define the optimal entanglement swapping period length as a function of the noise level at the different entanglement swapping sets.
\end{problem}
The resolutions of Problems 1-4 are proposed in Proposition 1 and Theorems 1-4.
\section{Entanglement Swapping Stability at Swapping Sets}
\label{sec3}
This section presents the stability analysis of the quantum repeaters for the different entanglement swapping sets.
\begin{proposition}(Noise-scaled weight coefficient). Let $\omega \left(\gamma \left(\pi _{S} \right)\right)$ be a weight at a non-complete swapping set ${\rm {\mathcal S}}\left(R_{j} \right)$ at $\gamma \left(\pi _{S} \right)>0$, where $\gamma \left(\pi _{S} \right)$ is the noise at a given $\pi _{S} $. For a ${\rm {\mathcal S}}^{*} \left(R_{j} \right)$ complete swapping set, $\gamma \left(\pi _{S} \right)=0$, and $\omega \left(\gamma \left(\pi _{S} \right)=0\right)=\omega ^{*} \left(\pi _{S} \right)$ is a maximized weight at $\pi _{S} $. For any non-complete set ${\rm {\mathcal S}}\left(R_{j} \right)$, $\omega \left(\gamma \left(\pi _{S} \right)\right)\ge \omega ^{*} \left(\pi _{S} \right)-f\left(\gamma \left(\pi _{S} \right)\right)$, where $f\left(\cdot \right)$ is a sub-linear function, and $\omega \left(\gamma \left(\pi _{S} \right)\right)<\omega ^{*} \left(\pi _{S} \right)$. At a perfect swapping set $\hat{{\rm {\mathcal S}}}\left(R_{j} \right)$, the weight is $\hat{\omega }\left(\pi _{S} \right)\le \omega ^{*} \left(\pi _{S} \right)$.
\end{proposition}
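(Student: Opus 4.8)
The plan is to interpret $\omega(\gamma(\pi_S))$ as the value of a maximum-weight matching in the bipartite graph attached to the swapping set, where the left vertices are the coincidence sets $\mathcal{S}_{R_j}^{(\pi_S)}((R_i,\sigma_k))$, the right vertices are the outgoing densities $\sigma_k$, and the weight of an edge $(R_i,\sigma_k)$ is the coincidence-set cardinality $Z_{R_j}^{(\pi_S)}((R_i,\sigma_k))$. The swapping constraint (Definition 9) forces any admissible assignment to be a matching, so I would define $\omega(\pi_S)=\max_{x}\sum_{i,k} Z_{R_j}^{(\pi_S)}((R_i,\sigma_k))\, x_{ik}$ over $0/1$ matrices $x$ obeying the matching constraints. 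This is exactly the MaxWeight scheduling object from input-queued switch theory that the paper has invoked as the governing analogy. First I would settle the complete-set case: by Proposition 1, $\gamma(\pi_S)=0$ yields a complete set $\mathcal{S}^*(R_j)$, so all $N$ outputs are present and every coincidence set keeps its full cardinality; the maximization then runs over the full edge set with no reduced weights, and I label the resulting value $\omega^*(\pi_S)$, which is the maximizer by construction.

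Next I would treat the non-complete case. For $\gamma(\pi_S)>0$, the double-loss convention and \fref{fig2}(c) remove $L>0$ outputs (leaving $M=N-L$) and delete the corresponding densities from the coincidence sets. Hence the feasible matchings at $\mathcal{S}(R_j)$ form a strict subset of those at $\mathcal{S}^*(R_j)$, and a maximum over a smaller feasible set cannot increase, giving $\omega(\gamma(\pi_S))\le\omega^*(\pi_S)$; strictness follows because at least one positively weighted edge of the optimal complete matching is deleted whenever $\gamma>0$. For the lower bound I would use that the optimal complete matching, restricted to the surviving outputs, is itself a feasible non-complete matching, so $\omega(\gamma(\pi_S))$ is at least $\omega^*(\pi_S)$ minus the aggregate weight of the deleted matched edges. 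Writing the number of lost densities as $L=L(\gamma(\pi_S))$ and collecting the removed weight into a single quantity, I define $f(\gamma(\pi_S))$ to be that aggregate deficit, so $\omega(\gamma(\pi_S))\ge\omega^*(\pi_S)-f(\gamma(\pi_S))$ holds by construction.

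Finally I would handle the perfect-set case. By Definition 5 and \fref{fig2}(a), a perfect set has $Z_{R_j}^{(\pi_S)}((R_i,\sigma_k))=1$ on every matched pair with $|\hat{\mathcal{S}}_I(R_j)|=|\hat{\mathcal{S}}_O(R_j)|=N$, so the unique perfect matching has weight $\hat{\omega}(\pi_S)=N$. The complete set shares the same $N$ outputs but has coincidence cardinalities $Z_{R_j}^{(\pi_S)}((R_i,\sigma_k))\ge 1$ with $Q>N$ incoming densities; since every output carries a positively weighted edge, the complete max-weight matching saturates all $N$ outputs with per-edge weight at least $1$, so $\omega^*(\pi_S)\ge N=\hat{\omega}(\pi_S)$, which is the claimed $\hat{\omega}(\pi_S)\le\omega^*(\pi_S)$.

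The hard part will be making the sub-linearity of $f$ precise rather than merely defining it as the removed weight. Establishing that $f$ is sub-linear requires relating $\gamma(\pi_S)$ to the expected number of lost densities $L(\gamma(\pi_S))$ and then showing the aggregate deleted weight grows slower than linearly in the noise. I expect this to follow from the Bernoulli i.i.d. loss model of Definition 7 combined with the observation that inputs freed by a lost output can be rematched, so part of each deleted edge's weight is recovered and the marginal weight lost per additional lost output decreases as $\gamma$ increases. Formalizing this rematching/averaging bound, rather than the three case comparisons, is where the substantive work lies.
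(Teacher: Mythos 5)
Your proposal is correct in substance and shares the paper's underlying formalism (the MaxWeight inner-product object from input-queued switch theory), but the execution is genuinely different and considerably more self-contained. The paper's proof defines $\omega(\pi_S)=\langle\zeta(\pi_S),Z_{R_j}(\pi_S)\rangle$, then obtains the three comparisons almost entirely by citation: the inequality $\langle\zeta^*(\pi_S),Z_{R_j}(\pi_S)\rangle-\langle\chi(\pi_S),Z_{R_j}(\pi_S)\rangle\ge 0$ is invoked "by some fundamental theory," the lower bound $\omega(\gamma(\pi_S))\ge\omega^*(\pi_S)-f(\gamma(\pi_S))$ is asserted with $f$ \emph{postulated} to be sub-linear (its conditions are simply written down), and the bulk of the proof is actually spent on Lyapunov-drift estimates $\mathbb{E}(\Delta_{\mathcal{L}}\,|\,Z_{R_j}(\pi_S))\le-\varepsilon|Z_{R_j}(\pi_S)|$ establishing strong/weak stability as a by-product — material that the proposition's statement does not strictly require. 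Your route replaces the citations with explicit combinatorics: monotonicity of a maximum over nested feasible matching sets gives $\omega(\gamma(\pi_S))\le\omega^*(\pi_S)$, restriction of the optimal complete matching to surviving outputs gives the lower bound with $f$ defined as the deleted weight, and the perfect-set case is settled by computing $\hat{\omega}(\pi_S)=N$ against $\omega^*(\pi_S)\ge N$. This buys transparency and removes the dependence on external switch-scheduling results; what it loses is the stability scaffolding the paper reuses in Theorems 1 and 2.

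Two caveats. First, your strictness argument for $\omega(\gamma(\pi_S))<\omega^*(\pi_S)$ — "at least one positively weighted edge of the optimal complete matching is deleted" — is incomplete when the optimum is not unique (the paper itself notes non-uniqueness): another optimal matching may avoid every deleted edge, so the maximum need not strictly decrease. Some additional hypothesis or argument is needed; the paper glosses this identically, so you are no worse off, but it is a real gap in both treatments. Second, you flag sub-linearity of $f$ as the substantive open work and propose to derive it from the Bernoulli loss model; be aware that the paper never proves it inside this proposition — sub-linearity is folded into the definition of $f$, and only in Theorem 4 is $f$ evaluated as $f(\gamma(\pi_S))=2\pi_S^*N$, a quantity constant in $\gamma(\pi_S)$ and hence trivially sub-linear. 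So within the paper's own framework your "hard part" is dissolved by a later evaluation rather than solved by a rematching/averaging estimate; if you carry out your program, you would be proving something stronger than the paper does.
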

\begin{proof}
At a given $\pi _{S} $, let $\zeta _{ik} \left(\rho _{A} ,\sigma _{k} \right)$ be constant, with respect to the swapping constraint, defined as
\begin{equation} \label{ZEqnNum888718} 
\zeta _{ik} \left(\rho _{A} ,\sigma _{k} \right)=\left\{\begin{array}{l} {1,{\rm \; if\; }\rho _{A} \in {\rm {\mathcal S}}_{R_{j} }^{\left(\pi _{S} \right)} \left(\left(R_{i} ,\sigma _{k} \right)\right){\rm \; }} \\ {0,{\rm \; otherwise,}} \end{array}\right.  
\end{equation} 
thus $\zeta _{ik} \left(\rho _{A} ,\sigma _{k} \right)=1$, if incoming density $\rho _{A} $ is selected from set ${\rm {\mathcal S}}_{R_{j} }^{\left(\pi _{S} \right)} \left(\left(R_{i} ,\sigma _{k} \right)\right)$ for the swapping with $\sigma _{k} $, and $0$ otherwise. The aim is therefore to construct a 
\begin{equation} \label{ZEqnNum544183} 
\zeta \left(\pi _{S} \right)=\zeta _{ik} \left(\rho _{A} ,\sigma _{k} \right)_{i\le N, k\le N}  
\end{equation} 
feasible entanglement swapping method for all input and output neighbors of $R_{j} $, for all $\pi _{S} $ entanglement swapping periods.

Then, from $Z_{R_{j} }^{\left(\pi _{S} \right)} \left(\left(R_{i} ,\sigma _{k} \right)\right)$ (see Definition 7) and \eqref{ZEqnNum888718}, a $\omega \left(\pi _{S} \right)$ weight coefficient can be defined for a given entanglement swapping $\zeta \left(\pi _{S} \right)$ at a given $\pi _{S} $, as
\begin{equation} \label{15)} 
\begin{split}
   \omega \left( {{\pi }_{S}} \right)&=\sum\limits_{i,k}{{{\zeta }_{ik}}\left( {{\rho }_{A}},{{\sigma }_{k}} \right)Z_{{{R}_{j}}}^{\left( {{\pi }_{S}} \right)}\left( \left( {{R}_{i}},{{\sigma }_{k}} \right) \right)} \\ 
 & =\left\langle \zeta \left( {{\pi }_{S}} \right), {{Z}_{{{R}_{j}}}}\left( {{\pi }_{S}} \right) \right\rangle ,  
\end{split}
\end{equation} 
where $\left\langle \cdot \right\rangle $ is the inner product, $Z_{R_{j} } \left(\pi _{S} \right)$ is a matrix of all coincidence set cardinalities for all input and output connections at $\pi _{S} $, defined as
\begin{equation} \label{ZEqnNum354275} 
Z_{R_{j} } \left(\pi _{S} \right)=Z_{R_{j} }^{\left(\pi _{S} \right)} \left(\left(R_{i} ,\sigma _{k} \right)\right)_{i\le N,k\le N} ,   
\end{equation} 
while $\zeta \left(\pi _{S} \right)$ is as given in \eqref{ZEqnNum544183}.

For a perfect and complete entanglement swapping set, at $\gamma \left(\pi _{S} \right)=0$, let $\zeta ^{*} \left(\pi _{S} \right)$ refer to the entanglement swapping, with $\omega \left(\gamma \left(\pi _{S} \right)=0\right)=\omega ^{*} \left(\pi _{S} \right)$, where  $\omega ^{*} \left(\pi _{S} \right)$ is the maximized weight coefficient defined as
\begin{equation} \label{ZEqnNum770876} 
\omega ^{*} \left(\gamma \left(\pi _{S} \right)=0\right)=\mathop{\max }\limits_{\zeta ^{*} \left(\pi _{S} \right)} \left\langle \zeta ^{*} \left(\pi _{S} \right),Z_{R_{j} } \left(\pi _{S} \right)\right\rangle ,   
\end{equation} 
where $\zeta ^{*} \left(\pi _{S} \right)$ is an optimal entanglement swapping method at $\gamma \left(\pi _{S} \right)=0$ (in general not unique, by theory) with the maximized weight. By some fundamental theory \cite{refL1,refL2,refL3,refD1}, it can be verified that for a non-complete set with entanglement swapping $\chi \left(\pi _{S} \right)$ at $\gamma \left(\pi _{S} \right)>0$, defined as
\begin{equation} \label{18)} 
\chi \left(\pi _{S} \right)=\left\{\left(\chi _{ik} \left(\rho _{A}, \sigma _{k} \right)\right)_{x}, x=Mi+k, i=0,\ldots , M-1, k=0, \ldots , M-1\right\} 
\end{equation} 
with norm a $\left|\chi \left(\pi _{S} \right)\right|$ \cite{refL1,refL3},  defined as
\begin{equation} \label{19)} 
\left|\chi \left(\pi _{S} \right)\right|=\mathop{\max }\limits_{k=0,\ldots ,M-1} \left(\sum _{x=0}^{M-1}\left(\chi _{ik} \left(\rho _{A} ,\sigma _{k} \right)\right)_{Mx+k} ,\sum _{y=0}^{M-1}\left(\chi _{ik} \left(\rho _{A} ,\sigma _{k} \right)\right)_{Mk+y}   \right),   
\end{equation} 
with $\left|\chi \left(\pi _{S} \right)\right|\le 1$, the relation
\begin{equation} \label{20)} 
\left\langle \zeta ^{*} \left(\pi _{S} \right),Z_{R_{j} } \left(\pi _{S} \right)\right\rangle -\left\langle \chi \left(\pi _{S} \right),Z_{R_{j} } \left(\pi _{S} \right)\right\rangle \ge 0 
\end{equation} 
holds for the weights.

Then, let ${\rm {\mathcal L}}\left(Z_{R_{j} } \left(\pi _{S} \right)\right)$ be a Lyapunov function \cite{refL1,refL2,refL3,refD1}of $Z_{R_{j} } \left(\pi _{S} \right)$ (see \eqref{ZEqnNum354275}) as
\begin{equation} \label{ZEqnNum792527} 
{\rm {\mathcal L}}\left(Z_{R_{j} } \left(\pi _{S} \right)\right)=\sum _{i,k}\left(Z_{R_{j} }^{\left(\pi _{S} \right)} \left(\left(R_{i} ,\sigma _{k} \right)\right)\right)^{2}.  
\end{equation} 

Then it can be verified \cite{refL1,refL2,refL3,refD1} that
\begin{equation} \label{ZEqnNum314655} 
{\mathbb{E}}\left({\rm {\mathcal L}}\left(Z_{R_{j} } \left(\pi '_{S} \right)\right)-\left. {\rm {\mathcal L}}\left(Z_{R_{j} } \left(\pi _{S} \right)\right)\right|Z_{R_{j} } \left(\pi _{S} \right)\right)\le -\varepsilon \left|Z_{R_{j} } \left(\pi _{S} \right)\right|,  
\end{equation} 
holds, as $Z_{R_{j} } \left(\pi _{S} \right)$ is sufficiently large \cite{refL1,refL2,refL3,refD1}, where $\varepsilon >0$. Since \eqref{ZEqnNum314655} is analogous to the condition on strong stability given in \eqref{ZEqnNum908737}, it follows that as \eqref{ZEqnNum770876} holds for all $\pi _{S} $ entanglement swapping periods, the entanglement swapping at $\gamma \left(\pi _{S} \right)=0$ in $R_{j} $ is a strongly stable entanglement swapping with maximized weight coefficients for all periods. 

Since for any complete and perfect entanglement swapping set, the noise is zero, the $\hat{\omega }\left(\pi _{S} \right)$ weight coefficient of at a perfect swapping set $\hat{{\rm {\mathcal S}}}\left(R_{j} \right)$ is also a maximized weight with $f\left(\gamma \left(\pi _{S} \right)\right)=0$ as
\begin{equation} \label{23)} 
\hat{\omega }\left(\pi _{S} \right)\le \omega ^{*} \left(\pi _{S} \right).   
\end{equation} 
As the noise is nonzero, $\gamma \left(\pi _{S} \right)>0$, an $\zeta \left(\pi _{S} \right)\ne \zeta ^{*} \left(\pi _{S} \right)$ entanglement swapping cannot reach the $\omega ^{*} \left(\pi _{S} \right)$ maximized weight coefficient \eqref{ZEqnNum770876}, thus
\begin{equation} \label{24)} 
\omega (\gamma \left(\pi _{S} \right))=\mathop{\max }\limits_{\zeta \left(\pi _{S} \right)} \left\langle \zeta \left(\pi _{S} \right),Z_{R_{j} } \left(\pi _{S} \right)\right\rangle <\omega ^{*} \left(\pi _{S} \right), 
\end{equation} 
where the non-zero noise $\gamma \left(\pi _{S} \right)$ decreases $\omega ^{*} \left(\pi _{S} \right)$ to $\omega \left(\gamma \left(\pi _{S} \right)\right)<\omega ^{*} \left(\pi _{S} \right)$  as
\begin{equation} \label{ZEqnNum577058} 
\omega \left(\gamma \left(\pi _{S} \right)\right)\ge \omega ^{*} \left(\gamma \left(\pi _{S} \right)=0\right)-f\left(\gamma \left(\pi _{S} \right)\right),  
\end{equation} 
where $f\left(\cdot \right)$ is a sub-linear function, as
\begin{equation} \label{26)} 
0\le f\left(\gamma \left(\pi _{S} \right)\right)<c\left(\gamma \left(\pi _{S} \right)\right),  
\end{equation} 
and
\begin{equation} \label{27)} 
\mathop{\lim }\limits_{\gamma \left(\pi _{S} \right)\to \infty } {\tfrac{f\left(\gamma \left(\pi _{S} \right)\right)}{\gamma \left(\pi _{S} \right)}} =0,  
\end{equation} 
where $\gamma \left(\pi _{S} \right)\ge \gamma \left(\pi _{S} =0\right)$, and $c>0$ is a  constant. 

For a non-complete set, \eqref{ZEqnNum314655} can be rewritten as
\begin{equation} \label{28)} 
{\mathbb{E}}\left({\rm {\mathcal L}}\left(Z_{R_{j} } \left(\pi '_{S} \right)\right)-\left. {\rm {\mathcal L}}\left(Z_{R_{j} } \left(\pi _{S} \right)\right)\right|Z_{R_{j} } \left(\pi _{S} \right)\right)\le -\varepsilon ,  
\end{equation} 
where $\varepsilon >0$, which represents the stability condition given in \eqref{ZEqnNum415455}. As follows, an entanglement swapping at a non-complete entanglement swapping set is stable. 
\end{proof}

These statements are further improved in Theorems 1 and 2, respectively.
\subsection{Non-Complete Swapping Sets}
\begin{theorem}
(Noise-scaled stability at non-complete swapping sets) An $\zeta \left(\pi _{S} \right)$ entanglement swapping at $\gamma \left(\pi _{S} \right)>0$ is stable for any non-complete entanglement swapping set ${\rm {\mathcal S}}\left(R_{j} \right)$.
\end{theorem}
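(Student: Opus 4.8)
The plan is to prove weak stability through a Foster--Lyapunov drift argument, reusing the quadratic Lyapunov function $\mathcal{L}(Z_{R_j}(\pi_S))=\sum_{i,k}\left(Z_{R_j}^{(\pi_S)}((R_i,\sigma_k))\right)^2$ from \eqref{ZEqnNum792527} and treating the coincidence-set cardinality matrix $Z_{R_j}(\pi_S)$ as the system state. The goal is to show that the one-step conditional drift is bounded above by a strictly negative constant once the state is large, i.e. to establish \eqref{28)}, since that inequality is precisely the weak-stability condition \eqref{ZEqnNum415455}. Because Proposition 2 already supplies the comparison \eqref{ZEqnNum577058} between the noisy weight $\omega(\gamma(\pi_S))$ and the maximal weight $\omega^*(\pi_S)$, much of the remaining work is to convert that weight gap into a drift bound.

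First I would write the per-period evolution of each coincidence cardinality from the increment relation \eqref{8)}: the arrivals $|B(R_i(\pi_S),\sigma_k)|$ are Bernoulli i.i.d. and hence bounded with bounded second moment, while the swapping $\zeta(\pi_S)$ removes at most one matrix per served coincidence set under the swapping constraint of Definition 9. Expanding the square then yields the standard decomposition of the drift into a bounded ``arrival-plus-service'' variance term of order $O(N^2)$, plus the cross term $2\langle Z_{R_j}(\pi_S),\lambda\rangle-2\langle Z_{R_j}(\pi_S),\mathbb{E}[S]\rangle$, where $\lambda$ is the effective (loss-thinned) arrival-rate matrix and $S$ is the service matrix induced by $\zeta(\pi_S)$.

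The decisive step is to bound the cross term via max-weight optimality. The policy is chosen by \eqref{24)} to maximize $\langle\zeta(\pi_S),Z_{R_j}(\pi_S)\rangle$, so the served term equals the weight $\omega(\gamma(\pi_S))$; here the noise works in our favour, since for a non-complete set $\gamma(\pi_S)>0$ triggers the ``double loss'' on both $\mathcal{S}_I(R_j)$ and $\mathcal{S}_O(R_j)$ and thereby strictly shrinks the effective arrival load, placing $\lambda$ in the interior of the schedulable region. I would argue that the served weight then exceeds the offered arrival weight by a margin proportional to $|Z_{R_j}(\pi_S)|$, while the deficit $\omega^*(\pi_S)-\omega(\gamma(\pi_S))$ stays only sub-linear by \eqref{ZEqnNum577058}--\eqref{27)}; combining these makes the cross term at most $-c'|Z_{R_j}(\pi_S)|$ for some $c'>0$, which dominates the $O(N^2)$ variance term once $|Z_{R_j}(\pi_S)|$ crosses a threshold and yields drift $\le-\varepsilon$.

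Finally I would invoke the Foster--Lyapunov criterion: negative expected drift outside a finite set of states forces tightness of the cardinality process, which is exactly $\lim_{\pi_S\to\infty}\Pr(|\mathcal{S}_I^{(\pi_S)}(R_j)|>B)<\varepsilon$ in \eqref{ZEqnNum415455}, giving weak stability for every non-complete set. The hard part will be the decisive step above, namely rigorously showing that the loss-thinned arrival rate sits strictly inside the schedulable region so that the max-weight cross term is genuinely negative with a uniform margin; this is an admissibility/throughput-optimality statement for the swapping policy under the double-loss model, and the delicate point is to make $\varepsilon$ uniform over all non-complete sets rather than merely dependent on the current state.
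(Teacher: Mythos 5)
Your proposal follows essentially the same route as the paper's proof: the same quadratic Lyapunov function on the coincidence-cardinality matrix, the same drift expansion into a bounded $2(N-L)^2$ term plus an arrival-minus-service cross term, the same use of max-weight optimality of $\zeta(\pi_S)$ together with the sub-linear gap $f\left(\gamma\left(\pi_S\right)\right)$ from Proposition 2, and the same conclusion of weak stability from the negative drift. The step you flag as ``the hard part''---rigorously placing the loss-thinned arrival rate strictly inside the schedulable region with a uniform margin---is precisely the step the paper also does not derive from the loss model: it imports the decomposition in \eqref{ZEqnNum101849} and simply sets $C_{1}=1-\sum_{z}\nu_{z}$ in \eqref{ZEqnNum400853} (implicitly assuming $C_1>0$) by citation to the input-queued-switch literature, so your concern is well placed but does not constitute a divergence in approach.
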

\begin{proof}
Let $\gamma \left(\pi _{S} \right)>0$ be the noise at a given $\pi _{S} $, and let $\zeta \left(\pi _{S} \right)$ be the actual entanglement swapping at any non-complete entanglement swapping set ${\rm {\mathcal S}}\left(R_{j} \right)$. Using the formalism of \cite{refD1}, let ${\rm {\mathcal L}}\left(X\right)$ refer to a Lyapunov function of a $M\times M$ size matrix $X$, defined as
\begin{equation} \label{29)} 
{\rm {\mathcal L}}\left(X\right)=\sum _{i,k}\left(x_{ik} \right)^{2}  ,   
\end{equation} 
where $x_{ik} $ is the $\left(i,k\right)$-th element of $X$. 

Let $C_{1} $ and $C_{2} $ constants, $C_{1} >0$, $C_{2} >0$. Then, an $\zeta \left(\pi _{S} \right)$ entanglement swapping with $\gamma \left(\pi _{S} \right)>0$ is stable if only
\begin{equation} \label{ZEqnNum985408} 
{\mathbb{E}}\left(\left. \Delta _{{\rm {\mathcal L}}} \right|Z_{R_{j} } \left(\pi _{S} \right)\right)\le -C_{1} \omega \left(\gamma \left(\pi _{S} \right)=0\right), 
\end{equation} 
where $\Delta _{{\rm {\mathcal L}}} $ is a difference of the Lyapunov functions ${\rm {\mathcal L}}\left(Z_{R_{j} } \left(\pi '_{S} \right)\right)$ and ${\rm {\mathcal L}}\left(Z_{R_{j} } \left(\pi _{S} \right)\right)$, where $\pi '_{S} $ is a next entanglement swapping period, defined as 
\begin{equation} \label{31)} 
\Delta _{{\rm {\mathcal L}}} ={\rm {\mathcal L}}\left(Z_{R_{j} } \left(\pi '_{S} \right)\right)-{\rm {\mathcal L}}\left(Z_{R_{j} } \left(\pi _{S} \right)\right),  
\end{equation} 
and 
\begin{equation} \label{32)} 
\omega \left(\gamma \left(\pi _{S} \right)=0\right)=\omega ^{*} \left(\pi _{S} \right)\ge C_{2} ,   
\end{equation} 
by theory \cite{refL1,refL2,refL3,refD1}. 

To verify \eqref{ZEqnNum985408}, first $\Delta _{{\rm {\mathcal L}}} $ is rewritten via \eqref{ZEqnNum792527} as
\begin{equation} \label{ZEqnNum432521} 
\begin{split}
   {{\Delta }_{\mathcal{L}}}&=\sum\limits_{i,k}{{{\left( Z_{{{R}_{j}}}^{\left( {{{{\pi }'_{S}}}} \right)}\left( \left( {{R}_{i}},{{\sigma }_{k}} \right) \right) \right)}^{2}}-{{\left( Z_{{{R}_{j}}}^{\left( {{\pi }_{S}} \right)}\left( \left( {{R}_{i}},{{\sigma }_{k}} \right) \right) \right)}^{2}}} \\ 
 & =\sum\limits_{i,k}{\left( Z_{{{R}_{j}}}^{\left( {{{{\pi }'_{S}}}} \right)}\left( \left( {{R}_{i}},{{\sigma }_{k}} \right) \right)-Z_{{{R}_{j}}}^{\left( {{\pi }_{S}} \right)}\left( \left( {{R}_{i}},{{\sigma }_{k}} \right) \right) \right)\left( Z_{{{R}_{j}}}^{\left( {{{{\pi }'_{S}}}} \right)}\left( \left( {{R}_{i}},{{\sigma }_{k}} \right) \right)+Z_{{{R}_{j}}}^{\left( {{\pi }_{S}} \right)}\left( \left( {{R}_{i}},{{\sigma }_{k}} \right) \right) \right),}  
\end{split}
\end{equation} 
where $Z_{R_{j} } \left(\pi '_{S} \right)$ can be evaluated as
\begin{equation} \label{ZEqnNum742078} 
\begin{split}
   {{Z}_{{{R}_{j}}}}\left( {{{{\pi }'_{S}}}} \right)&=\left( {{Z}_{{{R}_{j}}}}\left( {{\pi }_{S}} \right)-{{\zeta }_{ik}}\left( {{\rho }_{A}},{{\sigma }_{k}} \right) \right)+\left| \bar{B}\left( {{R}_{i}}\left( {{{{\pi }'_{S}}}} \right),{{\sigma }_{k}} \right) \right| \\ 
 & \le \max \left( \left( {{Z}_{{{R}_{j}}}}\left( {{\pi }_{S}} \right)-{{\zeta }_{ik}}\left( {{\rho }_{A}},{{\sigma }_{k}} \right) \right)+\left| \bar{B}\left( {{R}_{i}}\left( {{{{\pi }'_{S}}}} \right),{{\sigma }_{k}} \right) \right|,1 \right),  
\end{split}
\end{equation} 
where $\left|\bar{B}\left(R_{i} \left(\pi '_{S} \right),\sigma _{k} \right)\right|\le 1$ is the normalized number of arrival density matrices from $R_{i} $ for swapping with $\sigma _{k} $ at a next entanglement swapping period $\pi '_{S} $, defined as
\begin{equation} \label{ZEqnNum733839} 
\left|\bar{B}\left(R_{i} \left(\pi '_{S} \right),\sigma _{k} \right)\right|={\tfrac{\left|B\left(R_{i} \left(\pi '_{S} \right),\sigma _{k} \right)\right|}{\left|B_{R_{j} } \left(\pi _{S} \right)\right|}} ,  
\end{equation} 
where $\left|B_{R_{j} } \left(\pi _{S} \right)\right|=\sum _{i,k}\left|B\left(R_{i} \left(\pi _{S} \right),\sigma _{k} \right)\right| $ is a total number of incoming density matrices of $R_{j} $ from the $N$ quantum repeaters.

Using \eqref{ZEqnNum742078}, the result in \eqref{ZEqnNum432521} can be rewritten
\begin{equation} \label{36)} 
\begin{split}
   {{\Delta }_{\mathcal{L}}}&\le \sum\limits_{i,k}{\left( \left| \bar{B}\left( {{R}_{i}}\left( {{{{\pi }'_{S}}}} \right),{{\sigma }_{k}} \right) \right|-{{\zeta }_{ik}}\left( {{\rho }_{A}},{{\sigma }_{k}} \right) \right)\left( \left( 2Z_{{{R}_{j}}}^{\left( {{\pi }_{S}} \right)}\left( \left( {{R}_{i}},{{\sigma }_{k}} \right) \right)+1 \right)+1 \right)} \\ 
 & \le \sum\limits_{i,k}{\left( \left| \bar{B}\left( {{R}_{i}}\left( {{{{\pi }'_{S}}}} \right),{{\sigma }_{k}} \right) \right|-{{\zeta }_{ik}}\left( {{\rho }_{A}},{{\sigma }_{k}} \right) \right)\left( \left( 2Z_{{{R}_{j}}}^{\left( {{\pi }_{S}} \right)}\left( \left( {{R}_{i}},{{\sigma }_{k}} \right) \right) \right) \right)+2{{M}^{2}},}  
\end{split}
\end{equation} 
thus \eqref{ZEqnNum985408} can be rewritten as

\begin{equation} \label{ZEqnNum677350}
\begin{split}
   \mathbb{E}\left( \left. {{\Delta }_{\mathcal{L}}} \right|{{Z}_{{{R}_{j}}}}\left( {{\pi }_{S}} \right) \right)&\le 2\sum\limits_{i,k}{Z_{{{R}_{j}}}^{\left( {{\pi }_{S}} \right)}\left( \left( {{R}_{i}},{{\sigma }_{k}} \right) \right)\left( \mathbb{E}\left( \left| \bar{B}\left( {{R}_{i}}\left( {{\pi }_{S}} \right),{{\sigma }_{k}} \right) \right| \right)-\left. {{\zeta }_{ik}}\left( {{\rho }_{A}},{{\sigma }_{k}} \right) \right|{{Z}_{{{R}_{j}}}}\left( {{\pi }_{S}} \right) \right)+2{{M}^{2}}} \\ 
 & =2\sum\limits_{i,k}{Z_{{{R}_{j}}}^{\left( {{\pi }_{S}} \right)}\left( \left( {{R}_{i}},{{\sigma }_{k}} \right) \right)\left( \left| \bar{B}\left( {{R}_{i}}\left( {{\pi }_{S}} \right),{{\sigma }_{k}} \right) \right|-{{\zeta }_{ik}}\left( {{\rho }_{A}},{{\sigma }_{k}} \right) \right)+2{{\left( N-L \right)}^{2}}},  
\end{split}
\end{equation}
where ${\mathbb{E}}\left(\left|\bar{B}\left(R_{i} \left(\pi _{S} \right),\sigma _{k} \right)\right|\right)$ is the expected normalized number of density matrices arrive from $R_{i} $ for swapping with $\sigma _{k} $ at $\pi _{S} $. 

Let $\omega \left(\pi _{S} \left(\gamma \left(\pi _{S} \right)>0\right)\right)$ be the weight coefficient of the $\zeta \left(\pi _{S} \left(\gamma \left(\pi _{S} \right)>0\right)\right)$ actual entanglement swapping at the noise $\gamma \left(\pi _{S} \right)>0$ at a given $\pi _{S} $, as
\begin{equation} \label{38)} 
\omega \left(\pi _{S} \left(\gamma \left(\pi _{S} \right)>0\right)\right)=\left\langle \zeta \left(\pi _{S} \left(\gamma \left(\pi _{S} \right)>0\right)\right),Z_{R_{j} } \left(\pi _{S} \right)\right\rangle , 
\end{equation} 
and let $\alpha _{ik} $ be defined as
\begin{equation} \label{39)} 
\alpha _{ik} =Z_{R_{j} }^{\left(\pi _{S} \right)} \left(\left(R_{i} ,\sigma _{k} \right)\right)\bar{B}\left(R_{i} \left(\pi _{S} \right),\sigma _{k} \right).  
\end{equation} 
Then, by some fundamental theory \cite{refL1,refL2,refL3,refD1},
\begin{equation} \label{ZEqnNum101849} 
\begin{split}
   \sum\limits_{i,k}{{{\alpha }_{ik}}}&\le \sum\limits_{z}{{{\nu }_{z}}\left\langle {{\zeta }_{z}}\left( {{\pi }_{S}}\left( \gamma \left( {{\pi }_{S}} \right)>0 \right) \right),{{Z}_{{{R}_{j}}}}\left( {{\pi }_{S}} \right) \right\rangle } \\ 
 & =\sum\limits_{z}{{{\nu }_{z}}{{{{\omega }'_{z}}}}\left( {{\pi }_{S}}\left( \gamma \left( {{\pi }_{S}} \right)>0 \right) \right),}  
\end{split} 
\end{equation} 
where $\nu _{z} \ge 0$ is a constant, and $\zeta _{z} \left(\pi _{S} \left(\gamma \left(\pi _{S} \right)>0\right)\right)$ is a $z$-th entanglement swapping at a noise $\gamma \left(\pi _{S} \right)>0$, while $\omega '_{z} \left(\pi _{S} \left(\gamma \left(\pi _{S} \right)>0\right)\right)$ is the weight of $\zeta _{z} \left(\pi _{S} \left(\gamma \left(\pi _{S} \right)>0\right)\right)$.

Then, using \eqref{ZEqnNum101849}, ${\mathbb{E}}\left(\left. \Delta _{{\rm {\mathcal L}}} \right|Z_{R_{j} } \left(\pi _{S} \right)\right)$ from \eqref{ZEqnNum677350} can be rewritten as
 \begin{equation} \label{ZEqnNum588923}
\begin{split}
  & \mathbb{E}\left( \left. {{\Delta }_{\mathcal{L}}} \right|{{Z}_{{{R}_{j}}}}\left( {{\pi }_{S}} \right) \right) \\ 
 & \le 2\left( \sum\limits_{z}{{{\nu }_{z}}{{{{\omega }'_{z}}}}\left( {{\pi }_{S}}\left( \gamma \left( {{\pi }_{S}} \right)>0 \right) \right)-\omega \left( {{\pi }_{S}}\left( \gamma \left( {{\pi }_{S}} \right)>0 \right) \right)} \right)+2{{\left( N-L \right)}^{2}} \\ 
 & =2\left( \sum\limits_{z}{\begin{split}
  & {{\nu }_{z}}{{{{\omega }'_{z}}}}\left( {{\pi }_{S}}\left( \gamma \left( {{\pi }_{S}} \right)>0 \right) \right)-\omega \left( {{\pi }_{S}}\left( \gamma \left( {{\pi }_{S}} \right)=0 \right) \right) \\ 
 & +\omega \left( {{\pi }_{S}}\left( \gamma \left( {{\pi }_{S}} \right)=0 \right) \right)-\omega \left( {{\pi }_{S}}\left( \gamma \left( {{\pi }_{S}} \right)>0 \right) \right) \\ 
\end{split}} \right)+2{{\left( N-L \right)}^{2}} \\ 
 & \le 2\left( \sum\limits_{z}{{{\nu }_{z}}-1} \right)\omega \left( {{\pi }_{S}}\left( \gamma \left( {{\pi }_{S}} \right)=0 \right) \right)+2f\left( \gamma \left( {{\pi }_{S}} \right) \right)+2{{\left( N-L \right)}^{2}} \\ 
 & =-2{{C}_{1}}\omega \left( {{\pi }_{S}}\left( \gamma \left( {{\pi }_{S}} \right)=0 \right) \right)+2f\left( \gamma \left( {{\pi }_{S}} \right) \right)+2{{\left( N-L \right)}^{2}},  
\end{split}
\end{equation} 
where $C_{1} $ is set as \cite{refD1}
\begin{equation} \label{ZEqnNum400853} 
C_{1} =1-\sum _{z}\nu _{z}  .  
\end{equation} 
Therefore, there as $C_{2} $ is selected such that 
\begin{equation} \label{43)} 
0<C_{2} \le \omega \left(\pi _{S} \left(\gamma \left(\pi _{S} \right)=0\right)\right),  
\end{equation} 
then
\begin{equation} \label{ZEqnNum846609} 
{\mathbb{E}}\left(\left. \Delta _{{\rm {\mathcal L}}} \right|Z_{R_{j} } \left(\pi _{S} \right)\right)\le -C_{1} \omega \left(\pi _{S} \left(\gamma \left(\pi _{S} \right)=0\right)\right) 
\end{equation} 
holds, by theory \cite{refL1,refL2,refL3,refD1}, since \eqref{ZEqnNum846609} can be rewritten as
\begin{equation} \label{45)} 
{\mathbb{E}}\left(\left. \Delta _{{\rm {\mathcal L}}} \right|Z_{R_{j} } \left(\pi _{S} \right)\right)\le -\varepsilon ,  
\end{equation} 
where $\varepsilon $ is defined as
\begin{equation} \label{ZEqnNum449028} 
\varepsilon =C_{1} \omega \left(\pi _{S} \left(\gamma \left(\pi _{S} \right)=0\right)\right),   
\end{equation} 
therefore the stability condition in \eqref{ZEqnNum415455} is satisfied via
\begin{equation} \label{47)} 
\mathop{\lim }\limits_{\pi _{S} \to \infty } \Pr \left(\left|{\rm {\mathcal S}}_{I}^{\left(\pi _{S} \right)} \left(R_{j} \right)\right|>B\right)<C_{1} \omega \left(\pi _{S} \left(\gamma \left(\pi _{S} \right)=0\right)\right),   
\end{equation} 
that concludes the proof.
\end{proof}

As a corollary, \eqref{ZEqnNum985408} is satisfied for the $\zeta \left(\pi _{S} \right)$ entanglement swapping method with any $\gamma \left(\pi _{S} \right)>0$ non-zero noise, at a given $\pi _{S} $.

\subsection{Complete and Perfect Swapping Sets}

Lemma 1 extends the results for entanglement swapping at complete and perfect swapping sets.
\begin{lemma}(Noise-scaled stability at perfect and complete swapping sets) An $\zeta ^{*} \left(\pi _{S} \right)$ optimal entanglement swapping at $\gamma \left(\pi _{S} \right)=0$ is strongly stable for any complete ${\rm {\mathcal S}}^{*} \left(R_{j} \right)$ and perfect $\hat{{\rm {\mathcal S}}}\left(R_{j} \right)$ entanglement swapping set. 
\end{lemma}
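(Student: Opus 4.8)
The plan is to reuse the Lyapunov-drift machinery assembled in Proposition 2 and Theorem 1, specialized to the noiseless regime $\gamma \left(\pi _{S} \right)=0$ that characterizes complete and perfect swapping sets. By Proposition 1, a complete set ${\rm {\mathcal S}}^{*} \left(R_{j} \right)$ and a perfect set $\hat{{\rm {\mathcal S}}}\left(R_{j} \right)$ carry no loss, so in the drift bound of Theorem 1 the sub-linear correction vanishes, $f\left(\gamma \left(\pi _{S} \right)\right)=0$, and $L=0$, while the optimal swapping $\zeta ^{*} \left(\pi _{S} \right)$ attains the maximized weight $\omega ^{*} \left(\pi _{S} \right)$ of \eqref{ZEqnNum770876}. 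The target is to recover the strong-stability drift inequality \eqref{ZEqnNum314655}, i.e.\ a negative drift proportional to $\left|Z_{R_{j} } \left(\pi _{S} \right)\right|$ rather than the merely constant negative drift that sufficed for the weak stability \eqref{ZEqnNum415455} established in Theorem 1.

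First I would keep the same Lyapunov function ${\rm {\mathcal L}}\left(Z_{R_{j} } \left(\pi _{S} \right)\right)=\sum _{i,k}\left(Z_{R_{j} }^{\left(\pi _{S} \right)} \left(\left(R_{i} ,\sigma _{k} \right)\right)\right)^{2} $ and compute the one-step conditional drift ${\mathbb{E}}\left(\left. \Delta _{{\rm {\mathcal L}}} \right|Z_{R_{j} } \left(\pi _{S} \right)\right)$ exactly as in \eqref{ZEqnNum677350}, now with $L=0$ so the additive constant is $2N^{2}$. Next I would invoke the Birkhoff--von Neumann / capacity-region decomposition used for \eqref{ZEqnNum101849}: since the expected normalized arrivals ${\mathbb{E}}\left(\left|\bar{B}\left(R_{i} \left(\pi _{S} \right),\sigma _{k} \right)\right|\right)$ lie inside the region spanned by feasible swappings, there is a convex combination $\sum _{z}\nu _{z} \zeta _{z} $ with $\sum _{z}\nu _{z} <1$ dominating them componentwise. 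Substituting into the drift and using the maximality of $\omega ^{*} \left(\pi _{S} \right)$ over matchings gives, as in \eqref{ZEqnNum588923}, the bound
\begin{equation}
{\mathbb{E}}\left(\left. \Delta _{{\rm {\mathcal L}}} \right|Z_{R_{j} } \left(\pi _{S} \right)\right)\le -2C_{1} \omega ^{*} \left(\pi _{S} \right)+2N^{2},
\end{equation}
with $C_{1} =1-\sum _{z}\nu _{z} >0$ as in \eqref{ZEqnNum400853}.

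The main obstacle is upgrading this from a constant negative drift to one proportional to the norm $\left|Z_{R_{j} } \left(\pi _{S} \right)\right|$, which is exactly what separates strong stability \eqref{ZEqnNum908737} from the weak stability proved in Theorem 1. The key step will be the lower bound $\omega ^{*} \left(\pi _{S} \right)=\max _{\zeta ^{*} } \left\langle \zeta ^{*} ,Z_{R_{j} } \left(\pi _{S} \right)\right\rangle \ge \tfrac{1}{N} \left|Z_{R_{j} } \left(\pi _{S} \right)\right|$: a maximum-weight matching of the nonnegative coincidence matrix contains its single largest entry, and that entry is at least $1/N$ of the largest row or column sum, hence of the matrix norm of \eqref{19)}. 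Granting this, the additive $2N^{2}$ is dominated once $\left|Z_{R_{j} } \left(\pi _{S} \right)\right|$ is sufficiently large, so the drift satisfies ${\mathbb{E}}\left(\left. \Delta _{{\rm {\mathcal L}}} \right|Z_{R_{j} } \left(\pi _{S} \right)\right)\le -\varepsilon \left|Z_{R_{j} } \left(\pi _{S} \right)\right|$ with $\varepsilon >0$, which is precisely \eqref{ZEqnNum314655}.

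Finally I would close by the Foster--Lyapunov criterion: a negative drift proportional to the state norm forces $\mathop{\lim }\limits_{\pi _{S} \to \infty } \sup {\mathbb{E}}\left(\left|{\rm {\mathcal S}}_{I}^{\left(\pi _{S} \right)} \left(R_{j} \right)\right|\right)<\infty $, the strong-stability condition \eqref{ZEqnNum908737}. The same reasoning applies verbatim to the complete set ${\rm {\mathcal S}}^{*} \left(R_{j} \right)$, where $\left|{\rm {\mathcal S}}_{I}^{*} \left(R_{j} \right)\right|=Q>N$, and to the perfect set $\hat{{\rm {\mathcal S}}}\left(R_{j} \right)$, where $\left|\hat{{\rm {\mathcal S}}}_{I} \left(R_{j} \right)\right|=N$, since both are noiseless and served by $\zeta ^{*} \left(\pi _{S} \right)$; hence strong stability holds for each.
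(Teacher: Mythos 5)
Your proposal is correct and follows the same route as the paper's proof: the identical Lyapunov drift computation with $L=0$, yielding ${\mathbb{E}}\left(\left.\Delta_{\mathcal{L}}\right|Z_{R_{j}}\left(\pi_{S}\right)\right)\le -2C_{1}\omega^{*}\left(\pi_{S}\right)+2N^{2}$ as in \eqref{ZEqnNum805523}, followed by the conclusion that a norm-proportional drift as in \eqref{ZEqnNum748637} gives the strong-stability condition \eqref{ZEqnNum908737}. The one substantive difference is that where the paper passes from the constant-drift bound \eqref{ZEqnNum805523} to the norm-proportional bound \eqref{ZEqnNum748637} only "by some fundamentals of queueing theory," citing \cite{refL1,refL2,refL3,refD1}, you insert an explicit bridging inequality $\omega^{*}\left(\pi_{S}\right)\ge c\left|Z_{R_{j}}\left(\pi_{S}\right)\right|$; that is precisely the step the paper outsources to references, so your version is more self-contained. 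Two minor slips in that step, neither fatal: (i) a maximum-weight matching need not \emph{contain} the largest entry of the coincidence matrix (consider $\bigl(\begin{smallmatrix}10 & 9\\ 9 & 0\end{smallmatrix}\bigr)$, where the optimal matching is the anti-diagonal); what you actually need, and what is true since all entries are nonnegative, is only that the maximum matching weight is at least the largest entry. (ii) Your chain (largest entry $\ge$ norm$/N$, norm $\ge\left|Z_{R_{j}}\left(\pi_{S}\right)\right|/N$) yields the constant $c=1/N^{2}$ rather than the $1/N$ you display; either constant suffices, since the additive $2N^{2}$ is dominated once $\left|Z_{R_{j}}\left(\pi_{S}\right)\right|$ is sufficiently large. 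If you want the sharper $c=1/N$, decompose the total sum $\left|Z_{R_{j}}\left(\pi_{S}\right)\right|$ over the $N$ disjoint cyclic-shift permutations of the $N\times N$ coincidence matrix; the best of these has weight at least $\left|Z_{R_{j}}\left(\pi_{S}\right)\right|/N$, and $\omega^{*}\left(\pi_{S}\right)$ dominates it.
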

\begin{proof}
Let $\omega ^{*} \left(\gamma \left(\pi _{S} \right)=0\right)$ be the weight coefficient of $\zeta ^{*} \left(\pi _{S} \right)$ at $\gamma \left(\pi _{S} \right)=0$ at a given ${\rm {\mathcal S}}^{*} \left(R_{j} \right)$, as given in \eqref{ZEqnNum770876}, with  $\zeta ^{*} \left(\pi _{S} \right)$ as
\begin{equation} \label{ZEqnNum173893} 
\zeta ^{*} \left(\pi _{S} \right)=\arg \mathop{\max }\limits_{\zeta ^{*} \left(\pi _{S} \right)\in {\rm {\mathcal S}}\left(\zeta \left(\pi _{S} \right)\right)} \left\langle \zeta ^{*} \left(\pi _{S} \right),Z_{R_{j} } \left(\pi _{S} \right)\right\rangle ,  
\end{equation} 
where ${\rm {\mathcal S}}\left(\zeta \left(\pi _{S} \right)\right)$ is the set of all possible $N!$ entanglement swapping operations at a given $\pi _{S} $, and at $N$ outgoing density matrices, $\left|{\rm {\mathcal S}}_{O} \left(R_{j} \right)\right|=N$. For a $\hat{{\rm {\mathcal S}}}\left(R_{j} \right)$ perfect entanglement swapping set, ${\rm {\mathcal S}}\left(\zeta \left(\pi _{S} \right)\right)$ is the set of all possible $N!$ entanglement swapping operations, since $\left|{\rm {\mathcal S}}_{I} \left(R_{j} \right)\right|=\left|{\rm {\mathcal S}}_{O} \left(R_{j} \right)\right|=N$ \cite{refL1,refL3}.)

Then, let $\Delta _{{\rm {\mathcal L}}} $ be a difference of the Lyapunov functions ${\rm {\mathcal L}}\left(Z_{R_{j} } \left(\pi _{S} \right)\right)$ and ${\rm {\mathcal L}}\left(Z_{R_{j} } \left(\pi '_{S} \right)\right)$, 
\begin{equation} \label{49)} 
{\rm {\mathcal L}}\left(Z_{R_{j} } \left(\pi _{S} \right)\right)=\sum _{i,k}\left(Z_{R_{j} }^{\left(\pi _{S} \right)} \left(\left(R_{i} ,\sigma _{k} \right)\right)\right)^{2}   
\end{equation} 
and  
\begin{equation} \label{50)} 
{\rm {\mathcal L}}\left(Z_{R_{j} } \left(\pi '_{S} \right)\right)=\sum _{i,k}\left(Z_{R_{j} }^{\left(\pi '_{S} \right)} \left(\left(R_{i} ,\sigma _{k} \right)\right)\right)^{2}  , 
\end{equation} 
where $\pi '_{S} $ is a next entanglement swapping period; as 
\begin{equation} \label{51)} 
\Delta _{{\rm {\mathcal L}}} ={\rm {\mathcal L}}\left(Z_{R_{j} } \left(\pi '_{S} \right)\right)-{\rm {\mathcal L}}\left(Z_{R_{j} } \left(\pi _{S} \right)\right).  
\end{equation} 
Then, for any complete swapping set ${\rm {\mathcal S}}^{*} \left(R_{j} \right)$, from \eqref{ZEqnNum588923}, ${\mathbb{E}}\left(\left. \Delta _{{\rm {\mathcal L}}} \right|Z_{R_{j} } \left(\pi _{S} \right)\right)$ at $\zeta ^{*} \left(\pi _{S} \right)$ is as
\begin{equation} \label{ZEqnNum805523} 
{\mathbb{E}}\left(\left. \Delta _{{\rm {\mathcal L}}} \right|Z_{R_{j} } \left(\pi _{S} \right)\right)\le -2C_{1} \omega ^{*} \left(\gamma \left(\pi _{S} \right)\right)+2N^{2} ,  
\end{equation} 
where $C_{1} $ is set as in \eqref{ZEqnNum400853}, and by some fundamentals of queueing theory \cite{refL1,refL2,refL3,refD1}, the condition in \eqref{ZEqnNum908737} can be rewritten as 
\begin{equation} \label{ZEqnNum748637} 
{\mathbb{E}}\left(\left. \Delta _{{\rm {\mathcal L}}} \right|Z_{R_{j} }^{*} \left(\pi _{S} \right)\right)\le -\varepsilon \left|Z_{R_{j} }^{*} \left(\pi _{S} \right)\right|,  
\end{equation} 
where $\varepsilon $ as given in \eqref{ZEqnNum449028}, while $\left|Z_{R_{j} }^{*} \left(\pi _{S} \right)\right|$ is the cardinality of the coincidence sets of ${\rm {\mathcal S}}^{*} \left(R_{j} \right)$ at a given $\pi _{S} $, as
\begin{equation} \label{54)} 
\left|Z_{R_{j} }^{*} \left(\pi _{S} \right)\right|=\sum _{i,k}Z_{R_{j} }^{\left(\pi _{S} \right)} \left(\left(R_{i} ,\sigma _{k} \right)\right) =\left|{\rm {\mathcal S}}_{I}^{\left(\pi _{S} \right)} \left(R_{j} \right)\right|.  
\end{equation} 
Thus, \eqref{ZEqnNum748637} can be rewritten as
\begin{equation} \label{ZEqnNum511923} 
{\mathbb{E}}\left(\left. \Delta _{{\rm {\mathcal L}}} \right|Z_{R_{j} }^{*} \left(\pi _{S} \right)\right)\le -\varepsilon \left|{\rm {\mathcal S}}_{I}^{\left(\pi _{S} \right)} \left(R_{j} \right)\right|.  
\end{equation} 
By similar assumptions, for any $\hat{{\rm {\mathcal S}}}\left(R_{j} \right)$ perfect entanglement swapping with cardinality $\left|\hat{Z}_{R_{j} } \left(\pi _{S} \right)\right|$ of the coincidence sets of $\hat{{\rm {\mathcal S}}}\left(R_{j} \right)$ at a given $\pi _{S} $, the condition in \eqref{ZEqnNum908737} can be rewritten as 
\begin{equation} \label{ZEqnNum182627} 
{\mathbb{E}}\left(\left. \Delta _{{\rm {\mathcal L}}} \right|\hat{Z}_{R_{j} } \left(\pi _{S} \right)\right)\le -\varepsilon \left|\hat{Z}_{R_{j} } \left(\pi _{S} \right)\right|.  
\end{equation} 
Thus, from \eqref{ZEqnNum511923} and \eqref{ZEqnNum182627}, it follows that $\zeta ^{*} \left(\pi _{S} \right)$ \eqref{ZEqnNum173893} is strongly stable for any complete and perfect entanglement swapping set, which concludes the proof.
\end{proof}
\section{Noise-Scaled Entanglement Rate Maximization}
\label{sec4}
This section proposes the entanglement rate maximization procedure for the different entanglement swapping sets.

Since the entanglement swapping is stable for both complete and non-complete entanglement swapping, this allows us to derive further results for the noise-scaled entanglement rate. The proposed derivations utilize the fundamentals of queueing theory (Note: in queueing theory, Little's law defines a connection between the $L$ average queue length and the $W$ average delay as $L=\lambda W$, where $\lambda $ is the arrival rate. The stability property is a required preliminary condition for the relation.). The derivations of the maximized noise-scaled entanglement rate assume that an incoming density matrix $\rho $ chooses a particular output density matrix $\sigma $ for the entanglement swapping with probability $\Pr \left(\rho ,\sigma \right)=x\ge 0$. 

\subsection{Preliminaries}
Let $\left|Z_{R_{j} } \left(\pi _{S} \right)\right|$ be the cardinality of the coincidence sets at a given $\pi _{S} $, as
\begin{equation} \label{ZEqnNum450845} 
\left|Z_{R_{j} } \left(\pi _{S} \right)\right|=\sum _{i,k}Z_{R_{j} }^{\left(\pi _{S} \right)} \left(\left(R_{i} ,\sigma _{k} \right)\right) ,  
\end{equation} 
and let $\left|B_{R_{i} } \left(\pi _{S} \right)\right|$ be the total number of incoming density matrices in $R_{j} $ per a given $\pi _{S} $, as
\begin{equation} \label{ZEqnNum505402} 
\left|B_{R_{j} } \left(\pi _{S} \right)\right|=\sum _{i,k}\left|B\left(R_{i} \left(\pi _{S} \right),\sigma _{k} \right)\right| ,  
\end{equation} 
where $\left|B\left(R_{i} \left(\pi _{S} \right),\sigma _{k} \right)\right|$ refers to the number of density matrices arrive from $R_{i} $ for swapping with $\sigma _{k} $ per $\pi _{S} $.

From \eqref{ZEqnNum450845} and \eqref{ZEqnNum505402}, let $D\left(\pi _{S} \right)$ be the delay measured in entanglement swapping periods, as
\begin{equation} \label{59)} 
D\left(\pi _{S} \right)={\tfrac{\left|Z_{R_{j} } \left(\pi _{S} \right)\right|}{\left|B_{R_{j} } \left(\pi _{S} \right)\right|}} =\sum _{i,k}D_{R_{j} }^{\left(\pi _{S} \right)} \left(\left(R_{i} ,\sigma _{k} \right)\right) ,   
\end{equation} 
where $D_{R_{j} }^{\left(\pi _{S} \right)} \left(\left(R_{i} ,\sigma _{k} \right)\right)$ is the delay for a given $R_{i} $ at a given $\pi _{S} $, as
\begin{equation} \label{60)} 
D_{R_{j} }^{\left(\pi _{S} \right)} \left(\left(R_{i} ,\sigma _{k} \right)\right)={\tfrac{Z_{R_{j} }^{\left(\pi _{S} \right)} \left(\left(R_{i} ,\sigma _{k} \right)\right)}{\left|B\left(R_{i} \left(\pi _{S} \right),\sigma _{k} \right)\right|}} .   
\end{equation} 
At delay $D\left(\pi _{S} \right)\ge 0$, the $B'_{R_{j} } \left(\pi _{S} \right)$ number of swapped density matrices per $\pi _{S} $ is
\begin{equation} \label{ZEqnNum517324}
\begin{split}
   \left| {{{{B}'_{{{R}_{j}}}}}}\left( {{\pi }_{S}} \right) \right|&=\left( 1-\tfrac{L}{N} \right)\tfrac{{{\pi }_{S}}}{{{{\tilde{\pi }}}_{S}}}\left| {{B}_{{{R}_{j}}}}\left( {{\pi }_{S}} \right) \right| \\ 
 & =\left( 1-\tfrac{L}{N} \right)\tfrac{x{{t}_{C}}}{x{{t}_{C}}+D\left( {{\pi }_{S}} \right)x{{t}_{C}}}\left| {{B}_{{{R}_{j}}}}\left( {{\pi }_{S}} \right) \right| \\ 
 & =\left( 1-\tfrac{L}{N} \right)\tfrac{1}{1+D\left( {{\pi }_{S}} \right)}\left| {{B}_{{{R}_{j}}}}\left( {{\pi }_{S}} \right) \right| \\ 
 & =\left( 1-\tfrac{L}{N} \right)\tfrac{{{\left| {{B}_{{{R}_{j}}}}\left( {{\pi }_{S}} \right) \right|}^{2}}}{\left| {{B}_{{{R}_{j}}}}\left( {{\pi }_{S}} \right) \right|+\left( D\left( {{\pi }_{S}} \right)\left| {{B}_{{{R}_{j}}}}\left( {{\pi }_{S}} \right) \right| \right)},  
\end{split}
\end{equation}
where $0<L\le N$ is the number of lost density matrices in ${\rm {\mathcal S}}_{O} \left(R_{j} \right)$ of $R_{j} $ per $\pi _{S} $ at a non-zero noise $\gamma \left(\pi _{S} \right)>0$, $L=0$ if $\gamma \left(\pi _{S} \right)=0$, and $\tilde{\pi }_{S} $ is the extended period, defined as
\begin{equation} \label{62)} 
\tilde{\pi }_{S} =\pi _{S} +D\left(\pi _{S} \right),  
\end{equation} 
with ${\pi _{S} \mathord{\left/ {\vphantom {\pi _{S}  \left(\tilde{\pi }_{S} \right)}} \right. \kern-\nulldelimiterspace} \left(\tilde{\pi }_{S} \right)} \le 1$; thus, \eqref{ZEqnNum517324} identifies the $B'_{R_{j} } \left(\pi _{S} \right)$ outgoing entanglement rate per $\pi _{S} $ for a particular entanglement swapping set. 

\subsection{Non-Complete Swapping Sets }
\begin{theorem}
(Entanglement rate decrement at non-complete swapping sets). For a non-complete entanglement swapping set ${\rm {\mathcal S}}\left(R_{j} \right)$, $\gamma >0$, the $B'_{R_{j} } \left(\pi _{S} \right)$ outgoing entanglement rate is $\left|B'_{R_{j} } \left(\pi _{S} \right)\right|=\left(1-{\tfrac{L}{N}} \right){\tfrac{1}{1+D\left(\pi _{S} \right)}} \left(\left|B_{R_{j} } \left(\pi _{S} \right)\right|\right)$ per $\pi _{S} $, where $\left|B_{R_{j} } \left(\pi _{S} \right)\right|$ is the total incoming entanglement throughput at a given $\pi _{S} $, and $D\left(\pi _{S} \right)\le {\left|Z_{R_{j} } \left(\pi _{S} \right)\right|\mathord{\left/ {\vphantom {\left|Z_{R_{j} } \left(\pi _{S} \right)\right| \left|B_{R_{j} } \left(\pi _{S} \right)\right|}} \right. \kern-\nulldelimiterspace} \left|B_{R_{j} } \left(\pi _{S} \right)\right|} $, where $\mathop{\lim }\limits_{\pi _{S} \to \infty } {\mathbb{E}}\left[\left|Z_{R_{j} } \left(\pi _{S} \right)\right|\right]\le {\tfrac{\left(N-L\right)\beta }{C_{1} }} +\xi \left(\gamma \right)$, where $\beta =\sum _{i,k}\left(\left|\bar{B}\left(R_{i} \left(\pi _{S} \right),\sigma _{k} \right)\right|-\left|\bar{B}\left(R_{i} \left(\pi _{S} \right),\sigma _{k} \right)\right|^{2} \right) $ and $\xi \left(\gamma \right)={\tfrac{\left(N-L\right)}{2C_{1} }} f\left(\gamma \left(\pi _{S} \right)\right)$, where $C_{1} >0$ is a constant.
\end{theorem}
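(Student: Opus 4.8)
The plan is to separate the statement into its three components, dispatching the first two directly from the definitions and concentrating the work on the bound for the coincidence-set occupancy.

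First, the outgoing-rate identity $\left|B'_{R_j}(\pi_S)\right|=\left(1-\tfrac{L}{N}\right)\tfrac{1}{1+D(\pi_S)}\left|B_{R_j}(\pi_S)\right|$ is precisely the preliminary computation \eqref{ZEqnNum517324}: out of the $\left|B_{R_j}(\pi_S)\right|$ arriving matrices a fraction $\tfrac{L}{N}$ of the $N$ outputs is lost to the noise $\gamma>0$, while the backlog stretches the effective period from $\pi_S$ to $\tilde\pi_S=\pi_S+D(\pi_S)$, contributing the factor $\pi_S/\tilde\pi_S=1/(1+D(\pi_S))$. The delay bound $D(\pi_S)\le\left|Z_{R_j}(\pi_S)\right|/\left|B_{R_j}(\pi_S)\right|$ is the Little's-law identity \eqref{59)}; it is admissible because Theorem 1 has already shown that the non-complete swapping is stable, so the time-average occupancy is finite and the relation between mean delay and mean occupancy holds.

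Second, and this is the core, I would bound $\lim_{\pi_S\to\infty}\mathbb{E}\left[\left|Z_{R_j}(\pi_S)\right|\right]$ by sharpening the Lyapunov drift of Theorem 1. Reusing the quadratic expansion \eqref{ZEqnNum432521}--\eqref{ZEqnNum677350} with the update \eqref{ZEqnNum742078}, the first-order part again reproduces $2\langle\zeta,Z_{R_j}(\pi_S)\rangle$ and the arrival term $2\sum_{i,k}Z_{R_j}^{(\pi_S)}\bar B_{ik}$, which by \eqref{ZEqnNum101849}--\eqref{ZEqnNum588923} is controlled by $-2C_1\omega^*(\pi_S)+2f(\gamma(\pi_S))$ with $C_1=1-\sum_z\nu_z$ as in \eqref{ZEqnNum400853}. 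The improvement over Theorem 1 is that I would \emph{not} discard the second-order part as the crude constant $2(N-L)^2$, but instead take its conditional expectation under the Bernoulli i.i.d. arrival assumption of Definition 7; careful accounting of this arrival-variance contribution leaves exactly $\beta=\sum_{i,k}\left(\left|\bar B(R_i(\pi_S),\sigma_k)\right|-\left|\bar B(R_i(\pi_S),\sigma_k)\right|^2\right)$, yielding the sharpened drift $\mathbb{E}\left(\Delta_{\mathcal L}\mid Z_{R_j}(\pi_S)\right)\le 2\beta+f(\gamma(\pi_S))-2C_1\omega^*(\pi_S)$.

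Third, I would convert this drift into the occupancy bound. The linking inequality is $\omega^*(\pi_S)\ge\left|Z_{R_j}(\pi_S)\right|/(N-L)$: on a non-complete set there are $M=N-L$ active outputs, and since $\omega^*(\pi_S)=\max_{\zeta}\langle\zeta,Z_{R_j}(\pi_S)\rangle$ is a maximum-weight matching of the $M\times M$ coincidence cardinalities, it dominates the average of $\langle\zeta,Z_{R_j}(\pi_S)\rangle$ over all $M!$ matchings, which equals $\left|Z_{R_j}(\pi_S)\right|/M$. Substituting gives $\mathbb{E}\left(\Delta_{\mathcal L}\mid Z_{R_j}(\pi_S)\right)\le 2\beta+f(\gamma(\pi_S))-\tfrac{2C_1}{N-L}\left|Z_{R_j}(\pi_S)\right|$. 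Summing this telescoping drift over $\pi_S=1,\dots,T$, taking expectations so that $\mathcal L(Z_{R_j}(\cdot))$ cancels, dividing by $T$ and letting $T\to\infty$ --- where the strong stability of Theorem 1 guarantees $\mathbb{E}\left[\mathcal L(Z_{R_j}(T))\right]/T\to 0$ --- produces $\tfrac{2C_1}{N-L}\lim_{\pi_S\to\infty}\mathbb{E}\left[\left|Z_{R_j}(\pi_S)\right|\right]\le 2\beta+f(\gamma(\pi_S))$, i.e. the asserted $\lim_{\pi_S\to\infty}\mathbb{E}\left[\left|Z_{R_j}(\pi_S)\right|\right]\le\tfrac{(N-L)\beta}{C_1}+\tfrac{N-L}{2C_1}f(\gamma(\pi_S))=\tfrac{(N-L)\beta}{C_1}+\xi(\gamma)$. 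The step I expect to be the main obstacle is this coupling together with the second-moment bookkeeping of the previous paragraph: one must both establish the max-weight--versus--total-occupancy inequality $\omega^*\ge\left|Z_{R_j}(\pi_S)\right|/(N-L)$ and retain precisely the variance term $\beta$ while peeling off the sub-linear noise contribution $f(\gamma)$, and must justify the passage to the limit through the telescoped time-average using the stability already proved rather than by assuming a stationary law outright.
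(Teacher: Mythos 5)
Your proposal is correct and follows essentially the same route as the paper: the rate identity and the Little's-law delay bound are taken from the preliminaries, and the occupancy bound $\lim_{\pi_S\to\infty}\mathbb{E}\left[\left|Z_{R_j}\left(\pi_S\right)\right|\right]\le\tfrac{\left(N-L\right)\beta}{C_1}+\xi\left(\gamma\right)$ is obtained exactly as in the paper's proof, by sharpening the Theorem~1 Lyapunov drift so that the arrival-variance term $\beta$ replaces the crude $2\left(N-L\right)^{2}$ constant, telescoping the drift over $P$ periods, and passing to the Ces\`aro limit. Your explicit linking inequality $\omega^{*}\left(\pi_S\right)\ge\left|Z_{R_j}\left(\pi_S\right)\right|/\left(N-L\right)$ (max-weight matching dominates the average over all matchings) simply supplies the detail the paper hides behind ``after some calculations'' and its queueing-theory citations, and your bookkeeping of the $f\left(\gamma\left(\pi_S\right)\right)$ coefficient reaches the stated bound more directly than the paper's intermediate steps.
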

\begin{proof}
The entanglement rate decrement for a non-complete entanglement swapping set is as follows. 

After some calculations, the result in \eqref{ZEqnNum588923} can be rewritten as
\begin{equation} \label{ZEqnNum461144}
\begin{split}
  & \mathbb{E}\left( \left. {{\Delta }_{\mathcal{L}}} \right|{{Z}_{{{R}_{j}}}}\left( {{\pi }_{S}} \right) \right) \\ 
 & \le -2\tfrac{{{C}_{1}}}{N-L}\left| {{Z}_{{{R}_{j}}}}\left( {{\pi }_{S}} \right) \right|+2f\left( \gamma \left( {{\pi }_{S}} \right) \right)+2\beta ,  
\end{split}
\end{equation}
where
\begin{equation} \label{64)} 
C_{1} =1-\mathop{\max }\limits_{i} \left(\sum _{k}\left|\bar{B}\left(R_{i} \left(\pi _{S} \right),\sigma _{k} \right)\right| \right), 
\end{equation} 
where $\left|\bar{B}\left(R_{i} \left(\pi _{S} \right),\sigma _{k} \right)\right|$ refers to the normalized number of density matrices arrive from $R_{i} $ for swapping with $\sigma _{k} $ at $\pi _{S} $ as
\begin{equation} \label{65)} 
\left|\bar{B}\left(R_{i} \left(\pi _{S} \right),\sigma _{k} \right)\right|={\tfrac{\left|B\left(R_{i} \left(\pi _{S} \right),\sigma _{k} \right)\right|}{\sum _{i}\left|B\left(R_{i} \left(\pi _{S} \right),\sigma _{k} \right)\right| }} ,  
\end{equation} 
while $\beta $ is defined as \cite{refL1,refD1}
\begin{equation} \label{66)} 
\beta =\sum _{i,k}\left(\left|\bar{B}\left(R_{i} \left(\pi _{S} \right),\sigma _{k} \right)\right|-\left|\bar{B}\left(R_{i} \left(\pi _{S} \right),\sigma _{k} \right)\right|^{2} \right) .   
\end{equation} 
From \eqref{ZEqnNum461144}, ${\mathbb{E}}\left({\rm {\mathcal L}}\left(Z_{R_{j} } \left(\pi '_{S} \right)\right)\right)$ can be evaluated as
\begin{equation}\label{67)}
\begin{split}
   \mathbb{E}\left( \mathcal{L}\left( {{Z}_{{{R}_{j}}}}\left( {{{{\pi }'_{S}}}} \right) \right) \right)&=\mathbb{E}\left( \mathcal{L}\left( {{Z}_{{{R}_{j}}}}\left( {{{{\pi }'_{S}}}} \right) \right)-\mathcal{L}\left( {{Z}_{{{R}_{j}}}}\left( {{\pi }_{S}} \right) \right)+\mathcal{L}\left( {{Z}_{{{R}_{j}}}}\left( {{\pi }_{S}} \right) \right) \right) \\ 
 & =\mathbb{E}\left( \mathbb{E}\left( \mathcal{L}\left( {{Z}_{{{R}_{j}}}}\left( {{{{\pi }'_{S}}}} \right) \right)-\mathcal{L}\left. \left( {{Z}_{{{R}_{j}}}}\left( {{\pi }_{S}} \right) \right) \right|{{Z}_{{{R}_{j}}}}\left( {{\pi }_{S}} \right) \right) \right)+\mathbb{E}\left( \mathcal{L}\left( {{Z}_{{{R}_{j}}}}\left( {{\pi }_{S}} \right) \right) \right) \\ 
 & \le -2\tfrac{{{C}_{1}}}{N-L}\mathbb{E}\left( \left| {{Z}_{{{R}_{j}}}}\left( {{\pi }_{S}} \right) \right| \right)+2f\left( \gamma \left( {{\pi }_{S}} \right) \right)+2\beta +\mathbb{E}\left( \mathcal{L}\left( {{Z}_{{{R}_{j}}}}\left( {{\pi }_{S}} \right) \right) \right),  
\end{split}
\end{equation}
thus, after $P$ entanglement swapping periods $\pi _{S} =0,\ldots ,P-1$, ${\mathbb{E}}\left({\rm {\mathcal L}}\left(Z_{R_{j} } \left(P\pi _{S} \right)\right)\right)$ is yielded as
\begin{equation} \label{68)} 
{\mathbb{E}}\left({\rm {\mathcal L}}\left(Z_{R_{j} } \left(\pi _{S} \left(P\right)\right)\right)\right)\le P\left(2f\left(\gamma \left(\pi _{S} \right)\right)+2\beta \right)+{\mathbb{E}}\left({\rm {\mathcal L}}\left(Z_{R_{j} } \left(\pi _{S} \left(0\right)\right)\right)\right)-2{\tfrac{C_{1} }{N-L}} \sum _{\pi _{S} =0}^{P-1}{\mathbb{E}}\left(\left|Z_{R_{j} } \left(\pi _{S} \right)\right|\right) ,   
\end{equation} 
where $\pi _{S} \left(P\right)$ is the $P$-th entanglement swapping, while ${\mathbb{E}}\left({\rm {\mathcal L}}\left(Z_{R_{j} } \left(\pi _{S} \left(0\right)\right)\right)\right)$ identifies the initial system state, thus
\begin{equation} \label{69)} 
{\mathbb{E}}\left({\rm {\mathcal L}}\left(Z_{R_{j} } \left(\pi _{S} \left(0\right)\right)\right)\right)=0,  
\end{equation} 
by theory \cite{refD1}. 

Therefore, after $P$ entanglement swapping periods, the expected value of $\left|Z_{R_{j} } \left(\pi _{S} \right)\right|$ can be evaluated as
\begin{equation} \label{ZEqnNum472017} 
{\tfrac{1}{P}} \sum _{\pi _{S} =0}^{P-1}{\mathbb{E}}\left(\left|Z_{R_{j} } \left(\pi _{S} \right)\right|\right) \le {\tfrac{N-L}{2C_{1} }} \left(2f\left(\gamma \left(\pi _{S} \right)\right)+2\beta \right)-{\tfrac{1}{P}} {\mathbb{E}}\left({\rm {\mathcal L}}\left(Z_{R_{j} } \left(\pi _{S} \left(P\right)\right)\right)\right),  
\end{equation} 
where ${\rm {\mathcal L}}\left(Z_{R_{j} } \left(\pi _{S} \left(P\right)\right)\right)\ge 0$, thus, assuming that the arrival of the density matrices can be modeled as an i.i.d. arrival process, the result in \eqref{ZEqnNum472017} can be rewritten as
\begin{equation} \label{ZEqnNum474049} 
\mathop{\lim }\limits_{\pi _{S} \to \infty } {\tfrac{1}{P}} \sum _{\pi _{S} =0}^{P-1}{\mathbb{E}}\left(\left|Z_{R_{j} } \left(\pi _{S} \right)\right|\right)=\mathop{\lim }\limits_{\pi _{S} \to \infty } {\mathbb{E}}\left(\left|Z_{R_{j} } \left(\pi _{S} \right)\right|\right) \le {\tfrac{N-L}{2C_{1} }} \left(2f\left(\gamma \left(\pi _{S} \right)\right)+2\beta \right).  
\end{equation} 
Then, since for any noise $\gamma \left(\pi _{S} \right)$ at $\pi _{S} $, the relation
\begin{equation} \label{72)} 
\gamma \left(\pi _{S} \right)\le \left|Z_{R_{j} } \left(\pi _{S} \right)\right| 
\end{equation} 
holds, thus for any entanglement swapping period $\pi _{S} $, from the sub-linear property of $f\left(\cdot \right)$, the relation 
\begin{equation} \label{73)} 
f\left(\gamma \left(\pi _{S} \right)\right)\le f\left(\left|Z_{R_{j} } \left(\pi _{S} \right)\right|\right) 
\end{equation} 
follows for $f\left(\gamma \left(\pi _{S} \right)\right)$.

Therefore, $\omega \left(\gamma \left(\pi _{S} \right)\right)$ from \eqref{ZEqnNum577058} can be rewritten as
\begin{equation} \label{ZEqnNum446340} 
\omega \left(\gamma \left(\pi _{S} \right)\right)\ge \omega ^{*} \left(\gamma \left(\pi _{S} \right)=0\right)-f\left(\left|Z_{R_{j} } \left(\pi _{S} \right)\right|\right),  
\end{equation} 
such that for $f\left(\left|Z_{R_{j} } \left(\pi _{S} \right)\right|\right)$, the relation
\begin{equation} \label{75)} 
\mathop{\lim }\limits_{\pi _{S} \to \infty } {\tfrac{1}{P}} \sum _{\pi _{S} =0}^{P-1}f\left(\left|Z_{R_{j} } \left(\pi _{S} \right)\right|\right)=\mathop{\lim }\limits_{\pi _{S} \to \infty } {\mathbb{E}}\left(f\left(\left|Z_{R_{j} } \left(\pi _{S} \left(P\right)\right)\right|\right)\right) ,  
\end{equation} 
holds, which allows us to rewrite \eqref{ZEqnNum474049} in the following manner:
\begin{equation} \label{ZEqnNum512518} 
\mathop{\lim }\limits_{\pi _{S} \to \infty } {\mathbb{E}}\left(\left|Z_{R_{j} } \left(\pi _{S} \right)\right|\right)\le {\tfrac{\left(N-L\right)\beta }{C_{1} }} +{\tfrac{N-L}{2C_{1} }} \mathop{\lim }\limits_{\pi _{S} \to \infty } {\mathbb{E}}\left(f\left(\left|Z_{R_{j} } \left(\pi _{S} \right)\right|\right)\right),  
\end{equation} 
where
\begin{equation} \label{ZEqnNum482205} 
\mathop{\lim }\limits_{\pi _{S} \to \infty } {\mathbb{E}}\left(f\left(\left|Z_{R_{j} } \left(\pi _{S} \right)\right|\right)\right)=f\left(\gamma \left(\pi _{S} \right)\right),  
\end{equation} 
thus from \eqref{ZEqnNum482205}, \eqref{ZEqnNum512518} is as 
\begin{equation} \label{ZEqnNum557989} 
\mathop{\lim }\limits_{\pi _{S} \to \infty } {\mathbb{E}}\left(\left|Z_{R_{j} } \left(\pi _{S} \right)\right|\right)\le {\tfrac{\left(N-L\right)\beta }{C_{1} }} +{\tfrac{N-L}{2C_{1} }} f\left(\gamma \left(\pi _{S} \right)\right)={\tfrac{\left(N-L\right)\beta }{C_{1} }} +\xi \left(\gamma \right),  
\end{equation} 
where 
\begin{equation} \label{79)} 
\xi \left(\gamma \right)={\tfrac{\left(N-L\right)}{2C_{1} }} f\left(\gamma \left(\pi _{S} \right)\right).  
\end{equation} 
Therefore, the $D\left(\pi _{S} \right)$ delay for any non-complete swapping set is as
\begin{equation} \label{ZEqnNum102271} 
D\left(\pi _{S} \right)\le {\tfrac{\left|Z_{R_{j} } \left(\pi _{S} \right)\right|}{\left|B_{R_{j} } \left(\pi _{S} \right)\right|}} ={\tfrac{1}{\left|B_{R_{j} } \left(\pi _{S} \right)\right|}} \left({\tfrac{\left(N-L\right)\beta }{C_{1} }} +{\tfrac{N-L}{2C_{1} }} f\left(\gamma \left(\pi _{S} \right)\right)\right).   
\end{equation} 
As a corollary, the $B'_{R_{j} } \left(\pi _{S} \right)$ outgoing entanglement rate per $\pi _{S} $ at \eqref{ZEqnNum102271}, is as
\begin{equation}\label{ZEqnNum901920}
\begin{split}
   \left| {{{{B}'_{{{R}_{j}}}}}}\left( {{\pi }_{S}} \right) \right|&=\left( 1-\tfrac{L}{N} \right)\tfrac{1}{1+D\left( {{\pi }_{S}} \right)}\left( \left| {{B}_{{{R}_{j}}}}\left( {{\pi }_{S}} \right) \right| \right) \\ 
 & =\left( 1-\tfrac{L}{N} \right)\tfrac{{{\left| {{B}_{{{R}_{j}}}}\left( {{\pi }_{S}} \right) \right|}^{2}}}{\left| {{B}_{{{R}_{j}}}}\left( {{\pi }_{S}} \right) \right|+\left( D\left( {{\pi }_{S}} \right)\left| {{B}_{{{R}_{j}}}}\left( {{\pi }_{S}} \right) \right| \right)},  
\end{split}
\end{equation} 
that concludes the proof.
\end{proof}
\subsection{Complete Swapping Sets }
\begin{theorem}
(Entanglement rate decrement at complete swapping sets). For a complete entanglement swapping set ${\rm {\mathcal S}}^{*} \left(R_{j} \right)$, $\gamma =0$, the $B'_{R_{j} } \left(\pi _{S} \right)$ outgoing entanglement rate is $\left|B'_{R_{j} } \left(\pi _{S} \right)\right|={\tfrac{1}{1+D^{*} \left(\pi _{S} \right)}} \left|B_{R_{j} } \left(\pi _{S} \right)\right|$ per $\pi _{S} $, where $D^{*} \left(\pi _{S} \right)={\left|Z_{R_{j} }^{*} \left(\pi _{S} \right)\right|\mathord{\left/ {\vphantom {\left|Z_{R_{j} }^{*} \left(\pi _{S} \right)\right| \left|B_{R_{j} } \left(\pi _{S} \right)\right|}} \right. \kern-\nulldelimiterspace} \left|B_{R_{j} } \left(\pi _{S} \right)\right|} <D\left(\pi _{S} \right)$, with $\mathop{\lim }\limits_{\pi _{S} \to \infty } {\mathbb{E}}\left[\left|Z_{R_{j} }^{*} \left(\pi _{S} \right)\right|\right]\le {\tfrac{N\beta }{C_{1} }} $.
\end{theorem}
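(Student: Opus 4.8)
The plan is to run the same Lyapunov-drift argument used in the proof of Theorem 2, specialized to the noiseless regime that characterizes a complete swapping set. For a complete set $\mathcal{S}^*(R_j)$ one has $\gamma(\pi_S)=0$, so the loss vanishes, $L=0$, and the sub-linear penalty vanishes, $f(\gamma(\pi_S))=0$, whence $\xi(\gamma)=0$. First I would start from the conditional drift bound already established for the optimal swapping $\zeta^*(\pi_S)$ in Lemma 1, namely \eqref{ZEqnNum805523}, and re-express it in the sharper per-backlog form of \eqref{ZEqnNum461144}, using the max-weight lower bound $\omega^*(\pi_S)\ge |Z_{R_j}^*(\pi_S)|/N$ together with the tighter forcing constant $\beta$ of \eqref{66)} in place of $N^2$. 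With $L=0$ and $f=0$ this reads $\mathbb{E}(\Delta_{\mathcal{L}}\,|\,Z_{R_j}^*(\pi_S))\le -2\tfrac{C_1}{N}|Z_{R_j}^*(\pi_S)|+2\beta$, with $C_1$ as in \eqref{64)}.

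Next I would telescope this one-step bound over $P$ consecutive swapping periods exactly as in \eqref{68)}, invoke the zero initial condition \eqref{69)} and the nonnegativity $\mathcal{L}(Z_{R_j}^*(\pi_S(P)))\ge 0$, divide by $P$, and pass to the limit $P\to\infty$ under the i.i.d.\ arrival assumption. Since the only nonnegative forcing term that survives is $2\beta$ (the noise contribution $2f(\gamma(\pi_S))$ having disappeared), this produces directly $\lim_{\pi_S\to\infty}\mathbb{E}[|Z_{R_j}^*(\pi_S)|]\le \tfrac{N\beta}{C_1}$, which is the asserted steady-state bound. Substituting it into the delay definition $D^*(\pi_S)=|Z_{R_j}^*(\pi_S)|/|B_{R_j}(\pi_S)|$ and then into the outgoing-rate identity \eqref{ZEqnNum517324} with $L=0$ collapses the loss prefactor $1-L/N$ to unity and gives $\left|B'_{R_j}(\pi_S)\right|=\tfrac{1}{1+D^*(\pi_S)}|B_{R_j}(\pi_S)|$, which settles the quantitative part of the claim.

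The remaining and, I expect, genuinely delicate point is the strict inequality $D^*(\pi_S)<D(\pi_S)$. The Lyapunov analysis by itself only yields upper bounds on the expected backlogs, and here the complete-set bound $N\beta/C_1$ involves the full prefactor $N$ rather than $N-L$, so a term-by-term comparison with the non-complete bound \eqref{ZEqnNum557989} does not immediately give the inequality in the right direction. My intended route is to argue it from max-weight optimality: at $\gamma=0$ the swapping $\zeta^*$ attains the maximal weight $\omega^*(\pi_S)$ of \eqref{ZEqnNum770876}, whereas at any non-complete set $\gamma(\pi_S)>0$ forces $\omega(\gamma(\pi_S))\le\omega^*(\pi_S)-f(\gamma(\pi_S))<\omega^*(\pi_S)$ by \eqref{ZEqnNum577058} and \eqref{24)}; the strictly larger drainage weight at the complete set produces a strictly more negative drift, while the non-complete steady-state bound carries the additional strictly positive term $\xi(\gamma)$ that is absent here. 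The main obstacle is converting these ordered \emph{bounds} into a strict ordering of the actual delays, and I would handle it by coupling the two backlog processes under a common arrival realization — so that the extra noise-induced losses and the reduced weight at $\gamma(\pi_S)>0$ can be seen pathwise to leave a strictly larger residual coincidence set, and hence, through the monotone dependence of $D(\pi_S)$ on $|Z_{R_j}(\pi_S)|$, a strictly larger delay.
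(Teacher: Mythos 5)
Your proposal for the quantitative part of the theorem is essentially the paper's own argument: the paper simply takes the steady-state bound \eqref{ZEqnNum557989} from Theorem 2, sets $f\left(\gamma \left(\pi _{S} \right)\right)=0$, hence $\xi \left(\gamma \right)=0$ and $L=0$ (so $N-L=N$), obtains $\mathop{\lim }\limits_{\pi _{S} \to \infty } {\mathbb{E}}\left(\left|Z_{R_{j} }^{*} \left(\pi _{S} \right)\right|\right)\le {N\beta \mathord{\left/ {\vphantom {N\beta  C_{1} }} \right. \kern-\nulldelimiterspace} C_{1} } $, and substitutes into the delay and rate expressions exactly as you do. Your version re-derives the drift, telescoping and limiting steps explicitly (via the max-weight bound $\omega ^{*} \ge {\left|Z_{R_{j} }^{*} \right|\mathord{\left/ {\vphantom {\left|Z_{R_{j} }^{*} \right| N}} \right. \kern-\nulldelimiterspace} N} $ and the forcing constant $\beta $) rather than citing the Theorem 2 conclusion, but it is the same route and all steps check out.

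Where you genuinely depart from the paper is the strict inequality $D^{*} \left(\pi _{S} \right)<D\left(\pi _{S} \right)$, and your diagnosis there is correct: the paper asserts this relation in \eqref{ZEqnNum531673} with no justification, and it does not follow from comparing the two steady-state bounds — both because they are upper bounds rather than values, and because the complete-set bound carries the prefactor $N$ while the non-complete bound \eqref{ZEqnNum102271} carries $N-L$, so the bounds themselves can be ordered the wrong way whenever $f\left(\gamma \left(\pi _{S} \right)\right)<{2L\beta \mathord{\left/ {\vphantom {2L\beta  \left(N-L\right)}} \right. \kern-\nulldelimiterspace} \left(N-L\right)} $. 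Your proposed fix — coupling the complete-set and non-complete-set backlog processes under a common arrival realization and arguing pathwise dominance of the coincidence-set cardinalities — is a sensible way to close this gap, but note that as written it remains a sketch: the nontrivial step is verifying that the max-weight swapping preserves the dominance from one period to the next when the noisy system both loses densities and uses a suboptimal swapping. In this one respect your proposal is more careful than the paper, which leaves that clause of the statement effectively unproven.
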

\begin{proof}
Since for any complete swapping set, 
\begin{equation} \label{82)} 
f\left(\gamma \left(\pi _{S} \right)\right)=0,   
\end{equation} 
it follows that
\begin{equation} \label{83)} 
\xi \left(\gamma \right)=0,   
\end{equation} 
therefore \eqref{ZEqnNum557989} can be rewritten for a complete swapping set as
\begin{equation} \label{84)} 
\mathop{\lim }\limits_{\pi _{S} \to \infty } {\mathbb{E}}\left(\left|Z_{R_{j} }^{*} \left(\pi _{S} \right)\right|\right)\le {\tfrac{N\beta }{C_{1} }} .   
\end{equation} 
Therefore, the $D^{*} \left(\pi _{S} \right)$ decrement for any complete swapping set is as
\begin{equation} \label{ZEqnNum531673} 
D^{*} \left(\pi _{S} \right)\le {\tfrac{\left|Z_{R_{j} }^{*} \left(\pi _{S} \right)\right|}{\left|B_{R_{j} } \left(\pi _{S} \right)\right|}} ={\tfrac{N\beta }{C_{1} \left|B_{R_{j} } \left(\pi _{S} \right)\right|}} ,   
\end{equation} 
with relation $D^{*} \left(\pi _{S} \right)<D\left(\pi _{S} \right)$, where $D\left(\pi _{S} \right)$ is as in \eqref{ZEqnNum102271}. 

As a corollary, the $B'_{R_{j} } \left(\pi _{S} \right)$ entanglement rate at \eqref{ZEqnNum531673}, is as
\begin{equation}\label{86)}
\begin{split}
   \left| {{{{B}'_{{{R}_{j}}}}}}\left( {{\pi }_{S}} \right) \right|&=\tfrac{1}{1+{{D}^{*}}\left( {{\pi }_{S}} \right)}\left| {{B}_{{{R}_{j}}}}\left( {{\pi }_{S}} \right) \right| \\ 
 & =\tfrac{{{\left| {{B}_{{{R}_{j}}}}\left( {{\pi }_{S}} \right) \right|}^{2}}}{\left| {{B}_{{{R}_{j}}}}\left( {{\pi }_{S}} \right) \right|+\left( {{D}^{*}}\left( {{\pi }_{S}} \right)\left| {{B}_{{{R}_{j}}}}\left( {{\pi }_{S}} \right) \right| \right)},  
\end{split}
\end{equation}
which concludes the proof.
\end{proof}

\subsubsection{Perfect Swapping Sets}
\begin{lemma}
(Entanglement rate decrement at perfect swapping sets). For a perfect entanglement swapping set $\hat{{\rm {\mathcal S}}}\left(R_{j} \right)$, $\gamma =0$, the $B'_{R_{j} } \left(\pi _{S} \right)$ outgoing entanglement rate is $\left|B'_{R_{j} } \left(\pi _{S} \right)\right|={\tfrac{1}{1+\hat{D}\left(\pi _{S} \right)}} \left|B_{R_{j} } \left(\pi _{S} \right)\right|$ per $\pi _{S} $, where $\hat{D}\left(\pi _{S} \right)={\left|\hat{Z}_{R_{j} } \left(\pi _{S} \right)\right|\mathord{\left/ {\vphantom {\left|\hat{Z}_{R_{j} } \left(\pi _{S} \right)\right| \left|B_{R_{j} } \left(\pi _{S} \right)\right|}} \right. \kern-\nulldelimiterspace} \left|B_{R_{j} } \left(\pi _{S} \right)\right|} $, with ${\mathbb{E}}\left[\left|\hat{Z}_{R_{j} } \left(\pi _{S} \right)\right|\right]=N$.
\end{lemma}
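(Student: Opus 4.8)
The plan is to specialize the computation of Theorem 3 to the perfect swapping set, where the defining cardinality constraint $\left|\hat{\mathcal{S}}_I(R_j)\right| = \left|\hat{\mathcal{S}}_O(R_j)\right| = N$ of \eqref{6)} replaces the inequality $Q>N$ of the complete case. First I would record the structural consequence of a perfect set: each of the $N$ input connections contributes exactly one incoming density matrix per $\pi_S$, matched one-to-one against the $N$ outputs of $\hat{\mathcal{S}}_O(R_j)$, so every occupied coincidence set $\mathcal{S}_{R_j}^{(\pi_S)}\left(\left(R_i,\sigma_k\right)\right)$ has cardinality $1$ (cf.\ \fref{fig2}(a)). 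Substituting into \eqref{54)} then gives
\[
\left|\hat{Z}_{R_j}(\pi_S)\right| = \sum_{i,k} Z_{R_j}^{(\pi_S)}\!\left(\left(R_i,\sigma_k\right)\right) = \left|\hat{\mathcal{S}}_I^{(\pi_S)}(R_j)\right| = N,
\]
which holds deterministically for every swapping period; taking expectations yields $\mathbb{E}\!\left[\left|\hat{Z}_{R_j}(\pi_S)\right|\right] = N$, the claimed coincidence-set value.

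Next I would assemble the outgoing rate exactly as in Theorems 2 and 3. Because a perfect set carries zero noise, $\gamma=0$, the loss vanishes, $L=0$, so the prefactor $\left(1-L/N\right)$ of \eqref{ZEqnNum517324} collapses to unity. Defining the delay through \eqref{59)} as $\hat{D}(\pi_S) = \left|\hat{Z}_{R_j}(\pi_S)\right| / \left|B_{R_j}(\pi_S)\right|$ and substituting $L=0$ into the rate expression \eqref{ZEqnNum517324} produces
\[
\left|B'_{R_j}(\pi_S)\right| = \frac{1}{1+\hat{D}(\pi_S)}\left|B_{R_j}(\pi_S)\right|,
\]
which is precisely the asserted entanglement rate. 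Strong stability of $\zeta^*(\pi_S)$ at the perfect set, already secured in Lemma 1, guarantees that the limiting expectation of the coincidence-set cardinality exists, so the delay $\hat{D}(\pi_S)$ is well defined.

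The delicate point --- and the one step I would treat most carefully --- is justifying that the perfect case yields the exact equality $\mathbb{E}\!\left[\left|\hat{Z}_{R_j}(\pi_S)\right|\right] = N$ rather than merely the upper bound $N\beta/C_1$ inherited from \eqref{84)}. The resolution is that the one-to-one input--output matching of a perfect set leaves no residual backlog: each input row of the coincidence matrix is occupied by a single matched entry, so the total is pinned at $N$ by the set definition rather than controlled asymptotically by the Lyapunov drift. I would therefore argue that the fluctuation term $\beta$ of \eqref{66)}, which quantifies the queueing variability driving the complete-set bound, does not loosen the perfect-set value; the deterministic identity $\left|\hat{Z}_{R_j}(\pi_S)\right| = \left|\hat{\mathcal{S}}_I^{(\pi_S)}(R_j)\right| = N$ supersedes the inequality and fixes $\hat{D}(\pi_S)$, and hence the rate, exactly.
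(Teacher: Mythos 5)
Your proposal is correct and follows essentially the same route as the paper's own proof: the key fact that every coincidence set of a perfect swapping set has cardinality $Z_{R_{j} }^{\left(\pi _{S} \right)} \left(\left(R_{i} ,\sigma _{k} \right)\right)=1$, giving ${\mathbb{E}}\left[\left|\hat{Z}_{R_{j} } \left(\pi _{S} \right)\right|\right]=N$ deterministically, then substitution into the delay definition and the rate formula \eqref{ZEqnNum517324} with $L=0$. Your added remarks --- that the deterministic identity supersedes the Lyapunov-drift bound $N\beta /C_{1} $ of the complete case, and that $\gamma =0$ forces the prefactor $\left(1-{\tfrac{L}{N}} \right)$ to unity --- are sound elaborations of points the paper leaves implicit.
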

\begin{proof}
The proof trivially follows from the fact, that for a $\hat{{\rm {\mathcal S}}}\left(R_{j} \right)$ perfect entanglement swapping set, the $Z_{R_{j} }^{\left(\pi _{S} \right)} \left(\left(R_{i} ,\sigma _{k} \right)\right)$ cardinality of all $N$ coincidence sets ${\rm {\mathcal S}}_{R_{j} }^{\left(\pi _{S} \right)} \left(\left(R_{i} ,\sigma _{k} \right)\right)$, $i=1,\ldots ,N$, $k=1,\ldots ,N$, is $Z_{R_{j} }^{\left(\pi _{S} \right)} \left(\left(R_{i} ,\sigma _{k} \right)\right)=1$, thus 
\begin{equation} \label{87)} 
{\mathbb{E}}\left[\left|\hat{Z}_{R_{j} } \left(\pi _{S} \right)\right|\right]=N.   
\end{equation} 
As follows,
\begin{equation} \label{ZEqnNum669177} 
\hat{D}\left(\pi _{S} \right)\le {\tfrac{\left|\hat{Z}_{R_{j} } \left(\pi _{S} \right)\right|}{\left|B_{R_{j} } \left(\pi _{S} \right)\right|}} ={\tfrac{N}{\left|B_{R_{j} } \left(\pi _{S} \right)\right|}} ,  
\end{equation} 
and the $B'_{R_{j} } \left(\pi _{S} \right)$ entanglement rate at \eqref{ZEqnNum669177}, is as
\begin{equation}\label{89)}
\begin{split}
   \left| {{{{B}'_{{{R}_{j}}}}}}\left( {{\pi }_{S}} \right) \right|&=\tfrac{1}{1+\hat{D}\left( {{\pi }_{S}} \right)}\left| {{B}_{{{R}_{j}}}}\left( {{\pi }_{S}} \right) \right| \\ 
 & =\tfrac{{{\left| {{B}_{{{R}_{j}}}}\left( {{\pi }_{S}} \right) \right|}^{2}}}{\left| {{B}_{{{R}_{j}}}}\left( {{\pi }_{S}} \right) \right|+\left( \hat{D}\left( {{\pi }_{S}} \right)\left| {{B}_{{{R}_{j}}}}\left( {{\pi }_{S}} \right) \right| \right)}.  
\end{split}
\end{equation}
The proof is concluded here.
\end{proof}

\subsection{Sub-Linear Function at a Swapping Period}
\begin{theorem}
For any $\gamma \left(\pi _{S} \right)>0$, at a given $\pi _{S}^{*} =\left(1+h\right)\pi _{S} $ entanglement swapping period, $h>0$, $f\left(\gamma \left(\pi _{S} \right)\right)$ can be evaluated as $f\left(\gamma \left(\pi _{S} \right)\right)=2\pi _{S}^{*} N$.
\end{theorem}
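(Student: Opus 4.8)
The plan is to interpret $f(\gamma(\pi_S))$ as the noise-induced weight deficit and then bound that deficit by the total amount of swapping work available over the extended period $\pi_S^* = (1+h)\pi_S$. I would start from the defining inequality of the sub-linear function obtained in the stability analysis, $\omega(\gamma(\pi_S)) \ge \omega^*(\gamma(\pi_S)=0) - f(\gamma(\pi_S))$, which rearranges to
\begin{equation}
f(\gamma(\pi_S)) \ge \omega^*(\gamma(\pi_S)=0) - \omega(\gamma(\pi_S)).
\end{equation}
Bounding $f$ thus reduces to bounding the gap between the maximized zero-noise weight and the achievable weight at noise $\gamma(\pi_S)>0$. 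Since both weights are inner products $\langle \zeta, Z_{R_j}(\pi_S)\rangle$ of a (sub-)permutation swapping matrix with the coincidence-cardinality matrix, this gap is governed by how many swaps the noise prevents and by how large the coincidence cardinalities grow over the relevant horizon.

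Next I would fix that horizon. The extended period $\pi_S^*=(1+h)\pi_S$ is exactly the inflated period $\tilde{\pi}_S=(1+D(\pi_S))\pi_S$ used in the rate derivations, with $h$ playing the role of the per-period delay $D(\pi_S)$; it is the number of cycles the repeater must run to absorb the noise-induced losses. By the swapping constraint, in each cycle at most one incoming density per outgoing connection is swapped, so the weight increment chargeable to the noise in any single cycle is bounded by $N$. Summing over the $\pi_S^*$ cycles gives a single-sided bound of $\pi_S^* N$ on the accumulated deficit.

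Then I would account for the double-loss convention of the model: each lost density matrix is removed simultaneously from the input set $\mathcal{S}_I(R_j)$ and from the associated output set $\mathcal{S}_O(R_j)$, so every prevented swap degrades the weight on both the row (source $R_i$) and the column (target $\sigma_k$) side of the coincidence matrix. This doubles the single-sided count and yields $f(\gamma(\pi_S)) \le 2\pi_S^* N$; since the extended period is chosen precisely to saturate the loss compensation, the bound is attained, giving $f(\gamma(\pi_S)) = 2\pi_S^* N$. I would then check consistency with the sub-linearity requirements: for fixed $N$ and fixed $\pi_S^*$ the value $2\pi_S^* N$ is constant in $\gamma$, so $f(\gamma)/\gamma \to 0$ as $\gamma \to \infty$, confirming that the claimed expression is a legitimate sub-linear $f$.

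The hard part will be making the factor of two and the strict linear dependence on $\pi_S^*$ rigorous rather than heuristic. I must show carefully that the weight deficit accrues at most one $N$-bounded contribution per swapping cycle, and that it is specifically the double-loss convention, rather than the two-sided structure of the coincidence-matrix norm, that converts the single-sided total $\pi_S^* N$ into $2\pi_S^* N$ with equality instead of mere inequality.
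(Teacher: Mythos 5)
Your reduction is the right one---the defining inequality $\omega \left(\gamma \left(\pi _{S} \right)\right)\ge \omega ^{*} \left(\gamma \left(\pi _{S} \right)=0\right)-f\left(\gamma \left(\pi _{S} \right)\right)$ means it suffices to show $\omega ^{*} \left(\pi _{S} \right)-\omega \left(\pi _{S} \right)\le 2\pi _{S}^{*} N$---and your per-period observation (by the swapping constraint at most $N$ density-matrix pairs can be swapped per period, so the weight change chargeable to any single period is at most $N$) is exactly the engine the paper uses. But the way you manufacture the factor of two is a genuine gap. The paper never invokes the double-loss convention here. Instead it introduces a time-lagged comparison policy $\chi \left(\pi _{S} \right)=\zeta ^{*} \left(\pi _{S} -\pi _{S}^{*} \right)$, the optimal swapping of the $\left(\pi _{S} -\pi _{S}^{*} \right)$-th period, and makes two separate single-sided comparisons, each worth $\pi _{S}^{*} N$: first, $\omega \left(\pi _{S} -\pi _{S}^{*} \right)-\omega \left(\pi _{S} \right)\le \pi _{S}^{*} N$, i.e., the weight of the same policy $\chi$ evaluated against $Z_{R_{j} } \left(\pi _{S} -\pi _{S}^{*} \right)$ versus against $Z_{R_{j} } \left(\pi _{S} \right)$ can drift by at most $N$ in each of the $\pi _{S}^{*}$ elapsed periods; second, $\omega ^{*} \left(\pi _{S} \right)-\omega \left(\pi _{S} -\pi _{S}^{*} \right)\le \pi _{S}^{*} N$, i.e., the current optimum cannot exceed the lagged weight by more than the same drift. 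Adding the two chains gives $\omega \left(\pi _{S} \right)\ge \omega ^{*} \left(\pi _{S} \right)-2\pi _{S}^{*} N$, so $f\left(\gamma \left(\pi _{S} \right)\right)=2\pi _{S}^{*} N$ is a valid evaluation. The two copies of $\pi _{S}^{*} N$ live on the two legs of the comparison through the lagged policy; neither comes from a loss being counted on both the row and column sides of the coincidence matrix.

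Your double-loss route has two concrete problems. First, the double-loss convention describes how lost densities deplete ${\rm {\mathcal S}}_{I} \left(R_{j} \right)$ and ${\rm {\mathcal S}}_{O} \left(R_{j} \right)$; it does not translate into a bound on the difference of two inner products $\left\langle \zeta ^{*} \left(\pi _{S} \right),Z_{R_{j} } \left(\pi _{S} \right)\right\rangle -\left\langle \zeta \left(\pi _{S} \right),Z_{R_{j} } \left(\pi _{S} \right)\right\rangle$ taken against the \emph{same} coincidence matrix, and you yourself flag that you cannot yet make this step rigorous---that flagged step is precisely where the proof lives, and the lagged-policy construction is the missing idea that replaces it. Second, your claim that the bound is attained with equality (``the extended period is chosen precisely to saturate the loss compensation'') is unsupported and also unnecessary: the theorem only requires that $f\left(\gamma \left(\pi _{S} \right)\right)=2\pi _{S}^{*} N$ satisfy the defining inequality for $f$ together with sub-linearity (which you do verify correctly), not that the deficit equal this value. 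Likewise, identifying $h$ with the delay $D\left(\pi _{S} \right)$, i.e., $\pi _{S}^{*}$ with $\tilde{\pi }_{S}$, is neither needed nor made in the paper, where $h>0$ is an arbitrary given parameter.
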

\begin{proof}
Let $\zeta ^{*} \left(\pi _{S} \right)$ be the optimal entanglement swapping at $\gamma \left(\pi _{S} \right)=0$with a maximized weight coefficient $\omega ^{*} \left(\pi _{S} \right)$, and let $\chi \left(\pi _{S} \right)$ be an arbitrary entanglement swapping with defined as
\begin{equation} \label{ZEqnNum702785} 
\chi \left(\pi _{S} \right)=\zeta ^{*} \left(\pi _{S} -\pi _{S}^{*} \right),  
\end{equation} 
where $\zeta ^{*} \left(\pi _{S} -\pi _{S}^{*} \right)$ an optimal entanglement swapping at an $\left(\pi _{S} -\pi _{S}^{*} \right)$-th entanglement swapping period, while $\pi _{S}^{*} $ is an entanglement swapping period, defined as
\begin{equation} \label{91)} 
\pi _{S}^{*} =\left(1+h\right)\pi _{S} =\left(1+h\right)xt_{C} .  
\end{equation} 
where $h>0$. 

Then, the $\omega \left(\pi _{S} -\pi _{S}^{*} \right)$ weight coefficient of entanglement swapping $\chi \left(\pi _{S} \right)$ \eqref{ZEqnNum702785} at an $\left(\pi _{S} -\pi _{S}^{*} \right)$-th entanglement swapping period is as
\begin{equation} \label{ZEqnNum223298}
\begin{split}
   \omega \left( {{\pi }_{S}}-\pi _{S}^{*} \right)&=\sum\limits_{i,k}{{{\chi }_{ik}}\left( {{\rho }_{A}},{{\sigma }_{k}} \right)Z_{{{R}_{j}}}^{\left( {{\pi }_{S}}-\pi _{S}^{*} \right)}\left( \left( {{R}_{i}},{{\sigma }_{k}} \right) \right)} \\ 
 & =\left\langle \chi \left( {{\pi }_{S}} \right),{{Z}_{{{R}_{j}}}}\left( {{\pi }_{S}}-\pi _{S}^{*} \right) \right\rangle ,  
\end{split}
\end{equation}
while $\omega \left(\pi _{S} \right)$ at $\pi _{S} $ is as

\begin{equation} \label{ZEqnNum179592}
\begin{split}
   \omega \left( {{\pi }_{S}} \right)&=\sum\limits_{i,k}{{{\chi }_{ik}}\left( {{\rho }_{A}},{{\sigma }_{k}} \right)Z_{{{R}_{j}}}^{\left( {{\pi }_{S}} \right)}\left( \left( {{R}_{i}},{{\sigma }_{k}} \right) \right)} \\ 
 & =\left\langle \chi \left( {{\pi }_{S}} \right),{{Z}_{{{R}_{j}}}}\left( {{\pi }_{S}} \right) \right\rangle .  
\end{split}
\end{equation}
It can be concluded, that the difference of \eqref{ZEqnNum223298} and \eqref{ZEqnNum179592} is as
\begin{equation} \label{ZEqnNum295151} 
\omega \left(\pi _{S} -\pi _{S}^{*} \right)-\omega \left(\pi _{S} \right)\le \pi _{S}^{*} N, 
\end{equation} 
thus \eqref{ZEqnNum223298} is at most $\pi _{S}^{*} N$ more than \eqref{ZEqnNum179592}, since the weight coefficient of $\chi \left(\pi _{S} \right)$ can at most decrease by $N$ every period (i.e., at a given entanglement swapping period at most $N$ density matrix pairs can be swapped by $\chi \left(\pi _{S} \right)$). 

On the other hand, let 
\begin{equation}\label{ZEqnNum671774}
\begin{split}
   {{\omega }^{*}}\left( {{\pi }_{S}} \right)&=\omega \left( {{\pi }_{S}}\left( \gamma =0 \right) \right) \\ 
 & =\sum\limits_{i,k}{\zeta _{ik}^{*}\left( {{\rho }_{A}},{{\sigma }_{k}} \right)Z_{{{R}_{j}}}^{\left( {{\pi }_{S}} \right)}\left( \left( {{R}_{i}},{{\sigma }_{k}} \right) \right)} \\ 
 & =\left\langle {{\zeta }^{*}}\left( {{\pi }_{S}} \right),{{Z}_{{{R}_{j}}}}\left( {{\pi }_{S}} \right) \right\rangle   
\end{split}
\end{equation}
be the weight coefficient of $\zeta ^{*} \left(\pi _{S} \right)$ at $\pi _{S} $. It also can be verified, that the corresponding relation for the difference of \eqref{ZEqnNum223298} and \eqref{ZEqnNum179592} is as
\begin{equation} \label{96)} 
\omega ^{*} \left(\pi _{S} \right)-\omega \left(\pi _{S} -\pi _{S}^{*} \right)\le \pi _{S}^{*} N.  
\end{equation} 
From \eqref{ZEqnNum295151} and \eqref{ZEqnNum671774}, for the $\omega \left(\pi _{S} \right)$ coefficient of $\chi \left(\pi _{S} \right)$ at $\pi _{S} $, it follows that
\begin{equation} \label{97)} 
\omega \left(\pi _{S} \right)\ge \omega ^{*} \left(\pi _{S} \right)-2\pi _{S}^{*} N,   
\end{equation} 
therefore, function $f\left(\gamma \left(\pi _{S} \right)\right)$ for any non-zero noise, $\gamma \left(\pi _{S} \right)>0$, is evaluated as (i.e., $\chi \left(\pi _{S} \right)\ne \zeta ^{*} \left(\pi _{S} \right)$)
\begin{equation} \label{98)} 
f\left(\gamma \left(\pi _{S} \right)\right)=2\pi _{S}^{*} N,   
\end{equation} 
while for any $\gamma \left(\pi _{S} \right)=0$ (i.e., $\chi \left(\pi _{S} \right)=\zeta ^{*} \left(\pi _{S} \right)$)
\begin{equation} \label{99)} 
f\left(\gamma \left(\pi _{S} \right)\right)=0,   
\end{equation} 
which concludes the proof.
\end{proof}

\section{Performance Evaluation}
\label{sec5}
In this section, a numerical performance evaluation is proposed to study the delay and the entanglement rate at the different entanglement swapping sets.
\subsection{Entanglement Swapping Period}
In \fref{fig3}(a), the values of $\pi _{S}^{*} $  as a function of $h$ and $\pi _{S} $ are depicted. $\pi _{S} \in \left[1,10\right]$, $h\in \left[0,1\right]$. In \fref{fig3}(b), the values of $f\left(\gamma \left(\pi _{S} \right)\right)=2\pi _{S}^{*} N$ are depicted as a function of $h$ and $N$. 

\begin{center}
\begin{figure*}[!htbp]
\begin{center}
\includegraphics[angle = 0,width=1\linewidth]{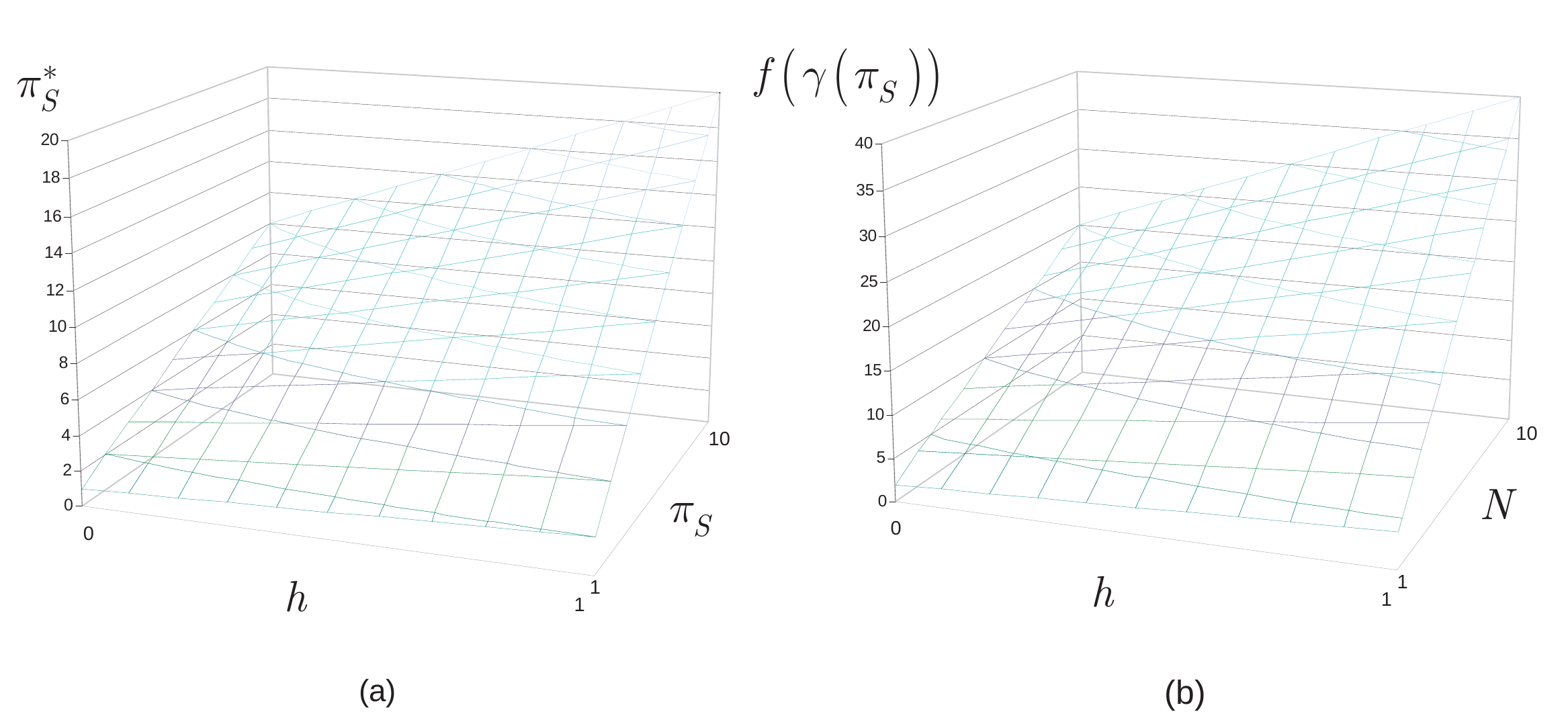}
\caption{(a) The values of $\pi _{S}^{*} $, $\pi _{S}^{*} =\left(1+h\right)\pi _{S} $, as a function of $h$, $h\in \left[0,1\right]$ and $\pi _{S} $, $\pi _{S} \in \left[1,10\right]$. (b) The values of $f\left(\gamma \left(\pi _{S} \right)\right)=2\pi _{S}^{*} N$ as a function of $h$ and $N$, $N\in \left[1,10\right]$.} 
 \label{fig3}
 \end{center}
\end{figure*}
\end{center}

\subsection{Delay and Entanglement Rate Ratio}
In \fref{fig4}(a), the values of $D\left(\pi _{S} \right)$ are depicted as a function of the incoming entanglement rate $\left|B_{R_{j} } \left(\pi _{S} \right)\right|$ for the non-complete ${\rm {\mathcal S}}\left(R_{j} \right)$, complete ${\rm {\mathcal S}}^{*} \left(R_{j} \right)$ and perfect $\hat{{\rm {\mathcal S}}}\left(R_{j} \right)$ entanglement swapping sets. The $D\left(\pi _{S} \right)$ delay values are evaluated as $D\left(\pi _{S} \right)={\left|Z_{R_{j} } \left(\pi _{S} \right)\right|\mathord{\left/ {\vphantom {\left|Z_{R_{j} } \left(\pi _{S} \right)\right| \left|B_{R_{j} } \left(\pi _{S} \right)\right|}} \right. \kern-\nulldelimiterspace} \left|B_{R_{j} } \left(\pi _{S} \right)\right|} $ via the maximized values of \eqref{ZEqnNum102271}, \eqref{ZEqnNum669177} and \eqref{ZEqnNum669177}, i.e., the delay depends on the cardinality of the coincidence set and the incoming entanglement rate (This relation also can be derived from Little's law; for details, see the fundamentals of queueing theory \cite{refL1,refL2,refL3,refD1}). 

In \fref{fig4}(b), the ratio $r$ of the $\left|B'_{R_{j} } \left(\pi _{S} \right)\right|$ outgoing and $\left|B_{R_{j} } \left(\pi _{S} \right)\right|$ incoming entanglement rates is depicted as a function of the incoming entanglement rate. For the non-complete entanglement swapping set, the loss is set as $\tilde{L}=0.2N$. 

\begin{center}
\begin{figure*}[!htbp]
\begin{center}
\includegraphics[angle = 0,width=1\linewidth]{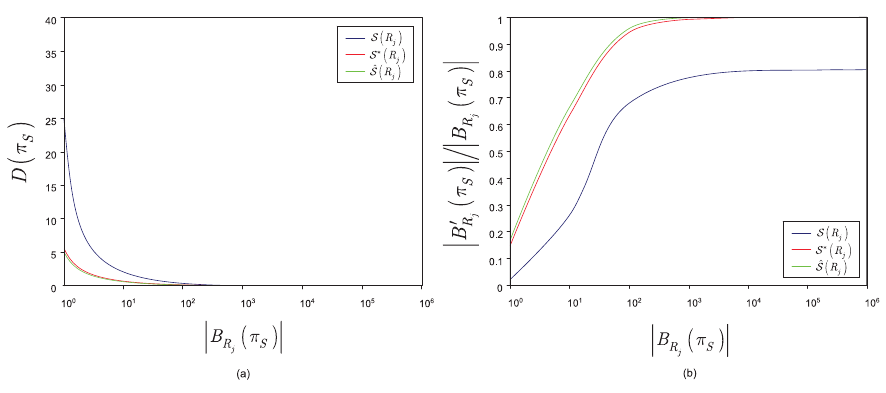}
\caption{(a) The $D\left(\pi _{S} \right)$ delay values for the different entanglement swapping sets as a function of the $\left|B_{R_{j} } \left(\pi _{S} \right)\right|$ incoming entanglement rate, $\left|B_{R_{j} } \left(\pi _{S} \right)\right|\in \left[10^{0} ,10^{8} \right]$, $N=5$, $\tilde{L}=0.2N$, $\beta =0.78$ for the complete and perfect sets, $\beta =0.64$ for a non-complete set and $C_{1} =0.7$, $\gamma \left(\pi _{S} \right)=0.2$, $h=0.2$, $\pi _{S}^{*} =1.2\pi _{S} $ and $f\left(\gamma \left(\pi _{S} \right)\right)=2\pi _{S}^{*} N=12$. (b) The ratio $r={\left| {{{{B}'_{{{R}_{j}}}}}}\left( {{\pi }_{S}} \right) \right|}/{\left| {{B}_{{{R}_{j}}}}\left( {{\pi }_{S}} \right) \right|}$ of the $\left| {{{{B}'_{{{R}_{j}}}}}}\left( {{\pi }_{S}} \right) \right|$ outgoing and $\left| {{B}_{{{R}_{j}}}}\left( {{\pi }_{S}} \right) \right|$ incoming entanglement rates for the different entanglement swapping sets as a function of the $\left| {{B}_{{{R}_{j}}}}\left( {{\pi }_{S}} \right) \right|$ incoming entanglement rate. The entanglement rate decrease for the non-complete swapping set caused by losses is $\tfrac{{\tilde{L}}}{N}\left| {{B}_{{{R}_{j}}}}\left( {{\pi }_{S}} \right) \right|$, while $\tilde{L}=0$ for the complete and perfect entanglement swapping sets.} 
 \label{fig4}
 \end{center}
\end{figure*}
\end{center}

The highest $D\left(\pi _{S} \right)$ delay values can be obtained for non-complete entanglement swapping set ${\rm {\mathcal S}}\left(R_{j} \right)$, while the lowest delays can be found for the perfect entanglement swapping set $\hat{{\rm {\mathcal S}}}\left(R_{j} \right)$. For a complete entanglement swapping set ${\rm {\mathcal S}}^{*} \left(R_{j} \right)$, the delay values are between the non-complete and perfect sets. This is because for a non-complete set, the losses due to the $\gamma \left(\pi _{S} \right)>0$ non-zero noise allow only an approximation of the delay of a complete set; thus, $D_{R_{j} } \left(\pi _{S} \right)>D_{R_{j} }^{*} \left(\pi _{S} \right)$. For a complete entanglement swapping set, while the noise is zero, $\gamma \left(\pi _{S} \right)=0$, the $\left|Z_{R_{j} }^{*} \left(\pi _{S} \right)\right|$ cardinality of the coincidence set is high, $\left|Z_{R_{j} }^{*} \left(\pi _{S} \right)\right|>\left|\hat{Z}_{R_{j} } \left(\pi _{S} \right)\right|$; thus, the $D_{R_{j} }^{*} \left(\pi _{S} \right)$ delay is higher than the $\hat{D}_{R_{j} } \left(\pi _{S} \right)$ delay of a perfect entanglement swapping set, $D_{R_{j} }^{*} \left(\pi _{S} \right)>\hat{D}_{R_{j} } \left(\pi _{S} \right)$. For a perfect set, the noise is zero, $\gamma \left(\pi _{S} \right)=0$ and the cardinality of the coincidence sets is one for all inputs; thus, $\left|Z_{R_{j} }^{*} \left(\pi _{S} \right)\right|>\left|\hat{Z}_{R_{j} } \left(\pi _{S} \right)\right|=N$. Therefore, the $\hat{D}_{R_{j} } \left(\pi _{S} \right)$ delay is minimal for a perfect set.

From the relation of $D_{R_{j} } \left(\pi _{S} \right)>D_{R_{j} }^{*} \left(\pi _{S} \right)>\hat{D}_{R_{j} } \left(\pi _{S} \right)$, the corresponding delays of the entanglement swapping sets, the relation for the decrease in the outgoing entanglement rates is straightforward, as follows. As $r\to 1$ holds for the ratio $r$ of the $\left|B'_{R_{j} } \left(\pi _{S} \right)\right|$ outgoing and $\left|B_{R_{j} } \left(\pi _{S} \right)\right|$ incoming entanglement rates, then $\left|B'_{R_{j} } \left(\pi _{S} \right)\right|\to \left|B_{R_{j} } \left(\pi _{S} \right)\right|$, i.e., no significant decrease is caused by the entanglement swapping operation.

The highest outgoing entanglement rates are obtained for a perfect entanglement swapping set, which is followed by the outgoing rates at a complete entanglement swapping set. For a non-complete set, the outgoing rate is significantly lower due to the losses caused by the non-zero noise in comparison with the perfect and complete sets.

\section{Conclusions}
\label{sec6}
The quantum repeaters determine the structure and performance attributes of the quantum Internet. Here, we defined the theory of noise-scaled stability derivation of the quantum repeaters and methods of entanglement rate maximization for the quantum Internet. The framework characterized the stability conditions of entanglement swapping in quantum repeaters and the terms of non-complete, complete and perfect entanglement swapping sets in the quantum repeaters to model the status of the quantum memory of the quantum repeaters. The defined terms are evaluated as a function of the noise level of the quantum repeaters to describe the physical procedures of the quantum repeaters. We derived the conditions for an optimal entanglement swapping at a particular noise level to maximize the entanglement throughput of the quantum repeaters. The results are applicable to the experimental quantum Internet.

\section*{Acknowledgements}
The research reported in this paper has been supported by the Hungarian Academy of Sciences (MTA Premium Postdoctoral Research Program 2019), by the National Research, Development and Innovation Fund (TUDFO/51757/2019-ITM, Thematic Excellence Program), by the National Research Development and Innovation Office of Hungary (Project No. 2017-1.2.1-NKP-2017-00001), by the Hungarian Scientific Research Fund - OTKA K-112125 and in part by the BME Artificial Intelligence FIKP grant of EMMI (Budapest University of Technology, BME FIKP-MI/SC).


\newpage
\appendix
\setcounter{table}{0}
\setcounter{figure}{0}
\setcounter{equation}{0}
\setcounter{algocf}{0}
\renewcommand{\thetable}{\Alph{section}.\arabic{table}}
\renewcommand{\thefigure}{\Alph{section}.\arabic{figure}}
\renewcommand{\theequation}{\Alph{section}.\arabic{equation}}
\renewcommand{\thealgocf}{\Alph{section}.\arabic{algocf}}

\setlength{\arrayrulewidth}{0.1mm}
\setlength{\tabcolsep}{5pt}
\renewcommand{\arraystretch}{1.5}
\section{Appendix}

\subsection{Notations}
The notations of the manuscript are summarized in  \tref{tab2}.
\begin{center}
\begin{longtable}{||l|p{4.5in}||}
\caption{Summary of notations.}
\label{tab2}
\endfirsthead
\endhead
\hline
\textit{Notation} & \textit{Description} \\ \hline
$N$ & An entangled quantum network, $N=\left(V,E\right)$, where $V$ is a set of nodes, $E$ is a set of entangled connections.  \\ \hline 
$A$  & A source user (quantum node) in the quantum network.  \\ \hline 
$B$  & A destination user (quantum node). \\ \hline 
$R_{j} $ & A current $j$-th quantum repeater, $j=1,\ldots ,q$, where $q$ is  the total number of quantum repeaters.  \\ \hline 
$R_{i} $ & A previous neighbor of $R_{i} $. \\ \hline 
$R_{k} $ & A next neighbor of $R_{k} $ (towards destination). \\ \hline 
$l$ & Level of entanglement. \\ \hline 
${\rm L}_{l} \left(x,y\right)$ & An $l$-level entangled connection between quantum nodes $x$ and $y$. \\ \hline 
$d\left(x,y\right)_{{\rm L}_{l} } $ & Hop-distance at an ${\rm L}_{l} $-level entangled connection between quantum nodes $x$ and $y$, $d\left(x,y\right)_{{\rm L}_{l} } =2^{l-1} $.  \\ \hline 
$O_{C} $ & An oscillator with frequency $f_{C} $, $f_{C} ={1\mathord{\left/ {\vphantom {1 t_{C} }} \right. \kern-\nulldelimiterspace} t_{C} } $, serves as a reference clock.   \\ \hline 
$C$ & A cycle, with $t_{C} ={1\mathord{\left/ {\vphantom {1 f_{C} }} \right. \kern-\nulldelimiterspace} f_{C} } $. \\ \hline 
$\pi _{S} $ & An entanglement swapping period, in which the set ${\rm {\mathcal S}}_{I} \left(R_{j} \right)$ of density matrices are swapped via the $U_{S} $ entanglement swapping operator with the ${\rm {\mathcal S}}_{O} \left(R_{j} \right)$ of density matrices, defined as $\pi _{S} =xt_{C} $, where $x$ is the number of $C$.   \\ \hline 
$\pi '_{S} $ & A next entanglement swapping period after $\pi _{S} $. \\ \hline 
$q$ & Total number of quantum repeaters in an entangled path ${\rm {\mathcal P}}\left(A_{i} \to B_{i} \right)$,  $q=d\left(A,B\right)_{{\rm L}_{l} } -1$. \\ \hline 
$B_{F} $ & Entanglement throughput [Bell states per $\pi _{S} $]. \\ \hline 
$\left|B_{F} \right|$ & Number of entangled states [Number of Bell states]. \\ \hline 
${\rm L}_{l} \left(k\right)$ & A $k$-th entangled connection. \\ \hline 
$B_{F} \left({\rm L}_{l} \left(k\right)\right)$ & Entanglement throughput of the entangled connection ${\rm L}_{l} \left(k\right)$ [Bell states per $\pi _{S} $].  \\ \hline 
$\rho $ & In a $j$-th quantum repeater $R_{j} $, an $\rho $ incoming density matrix is a half of an entangled state ${\left| \beta _{00}  \right\rangle} $ received from a previous neighbor node $R_{j-1} $. \\ \hline 
$\sigma $ & The $\sigma $ outgoing density matrix in $R_{j} $ is a half of an entangled state ${\left| \beta _{00}  \right\rangle} $ shared with a next neighbor node $R_{j+1} $. \\ \hline 
$U_{S} $ & Entanglement swapping operation is a local transformation that swaps an incoming density matrix $\rho $ with an outgoing density matrix $\sigma $ in a quantum repeater $R$. \\ \hline 
${\rm {\mathcal S}}_{I} \left(R_{j} \right)$ & Set of incoming density matrices stored in the quantum memory of $R_{j} $, ${\rm {\mathcal S}}_{I} \left(R_{j} \right)=\bigcup _{i}\rho _{i}  $, where $\rho _{i} $ is an $i$-th density matrix. \\ \hline 
${\rm {\mathcal S}}_{O} \left(R_{j} \right)$ & Set of outgoing density matrices stored in the quantum memory of $R_{j} $, ${\rm {\mathcal S}}_{O} \left(R_{j} \right)=\bigcup _{i}\sigma _{i}  $, where $\sigma _{i} $ is an $i$-th density matrix. \\ \hline 
${\rm {\mathcal S}}_{I}^{*} \left(R_{j} \right)$ & A complete set of incoming density matrices. Set ${\rm {\mathcal S}}_{I} \left(R_{j} \right)$ formulates a ${\rm {\mathcal S}}_{I}^{*} \left(R_{j} \right)$ complete set if ${\rm {\mathcal S}}_{I} \left(R_{j} \right)$ contains all the $Q=\sum _{i=1}^{N}\left|B_{i} \right| $ incoming density matrices per $\pi _{S} $ that is received by $R_{j} $ in a swapping period, where $N$ is the number of input entangled connections of $R_{j} $, $\left|B_{i} \right|$ is the number of incoming densities of the $i$-th input connection per $\pi _{S} $, thus ${\rm {\mathcal S}}_{I} \left(R_{j} \right)=\bigcup _{i=1}^{Q}\rho _{i}  $ and $\left|{\rm {\mathcal S}}_{I} \left(R_{j} \right)\right|=Q$. \\ \hline 
${\rm {\mathcal S}}_{O}^{*} \left(R_{j} \right)$ & A complete set of outgoing density matrices. An ${\rm {\mathcal S}}_{O} \left(R_{j} \right)$ set formulates a ${\rm {\mathcal S}}_{O}^{*} \left(R_{j} \right)$ complete set, if ${\rm {\mathcal S}}_{O} \left(R_{j} \right)$ contains all the $N$ outgoing density matrices that is shared by $R_{j} $ during a swapping period $\pi _{S} $, thus ${\rm {\mathcal S}}_{O} \left(R_{j} \right)=\bigcup _{i=1}^{N}\sigma _{i}  $ and $\left|{\rm {\mathcal S}}_{O} \left(R_{j} \right)\right|=N$. \\ \hline 
${\rm {\mathcal S}}\left(R_{j} \right)$ & An entanglement swapping set of $R_{j} $, ${\rm {\mathcal S}}\left(R_{j} \right)={\rm {\mathcal S}}_{I} \left(R_{j} \right)\bigcup {\rm {\mathcal S}}_{O} \left(R_{j} \right) $ that describes the status of the quantum memory in $R_{j} $. \\ \hline 
${\rm {\mathcal S}}^{*} \left(R_{j} \right)$ & A complete entanglement swapping set. A ${\rm {\mathcal S}}\left(R_{j} \right)$ is a ${\rm {\mathcal S}}^{*} \left(R_{j} \right)$ complete swapping set, if ${\rm {\mathcal S}}^{*} \left(R_{j} \right)={\rm {\mathcal S}}_{I}^{*} \left(R_{j} \right)\bigcup {\rm {\mathcal S}}_{O}^{*} \left(R_{j} \right) $, with cardinality $\left|{\rm {\mathcal S}}^{*} \left(R_{j} \right)\right|=Q+N$. \\ \hline 
${\rm {\mathcal S}}^{*} \left(R_{j} \right)$ & A perfect entanglement swapping set. A ${\rm {\mathcal S}}^{*} \left(R_{j} \right)$ complete swapping set is a $\hat{{\rm {\mathcal S}}}\left(R_{j} \right)=\hat{{\rm {\mathcal S}}}_{I} \left(R_{j} \right)\bigcup \hat{{\rm {\mathcal S}}}_{O} \left(R_{j} \right) $ perfect swapping set at a given $\pi _{S} $, if $\left|\hat{{\rm {\mathcal S}}}\left(R_{j} \right)\right|=N+N$. \\ \hline 
${\rm {\mathcal S}}_{R_{j} }^{\left(\pi _{S} \right)} \left(\left(R_{i} ,\sigma _{k} \right)\right)$ & A coincidence set, a subset of incoming density matrices in ${\rm {\mathcal S}}_{I} \left(R_{j} \right)$ of $R_{j} $ received from $R_{i} $ that requires the same outgoing density matrix $\sigma _{k} $ from ${\rm {\mathcal S}}_{O} \left(R_{j} \right)$ for the entanglement swapping.   \\ \hline 
$Z_{R_{j} }^{\left(\pi '_{S} \right)} \left(\left(R_{i} ,\sigma _{k} \right)\right)$ & Cardinality of the coincidence set ${\rm {\mathcal S}}_{R_{j} }^{\left(\pi _{S} \right)} \left(\left(R_{i} ,\sigma _{k} \right)\right)$ [Number of Bell states]. \\ \hline 
$\left|B\left(R_{i} \left(\pi _{S} \right),\sigma _{k} \right)\right|$ & Number of density matrices arrive from $R_{i} $ to $R_{j} $ for swapping with $\sigma _{k} $ at $\pi _{S} $. It increments the cardinality of the coincidence set as $Z_{R_{j} }^{\left(\pi '_{S} \right)} \left(\left(R_{i} ,\sigma _{k} \right)\right)=Z_{R_{j} }^{\left(\pi _{S} \right)} \left(\left(R_{i} ,\sigma _{k} \right)\right)+\left|B\left(R_{i} \left(\pi _{S} \right),\sigma _{k} \right)\right|$, where $\pi '_{S} $ is a next entanglement swapping period [Number of Bell states]. \\ \hline 
$\left|B_{R_{i} } \left(\pi _{S} \right)\right|$ & Incoming entanglement rate of $R_{j} $ per a $\pi _{S} $, defined as\newline $\left|B_{R_{j} } \left(\pi _{S} \right)\right|=\sum _{i,k}\left|B\left(R_{i} \left(\pi _{S} \right),\sigma _{k} \right)\right| $, where $\left|B\left(R_{i} \left(\pi _{S} \right),\sigma _{k} \right)\right|$ refer to the number of density matrices arrive from $R_{i} $ for swapping with $\sigma _{k} $ per $\pi _{S} $ [Bell states per $\pi _{S} $]. \\ \hline 
$D\left(\pi _{S} \right)$ & Delay, measured in entanglement swapping periods  $\pi _{S} $ [Number of $\pi _{S} $ periods]. \\ \hline 
$\left|B'_{R_{j} } \left(\pi _{S} \right)\right|$ & Outgoing entanglement rate of $R_{j} $, defined as\newline $\left|B'_{R_{j} } \left(\pi _{S} \right)\right|=\left(1-{\tfrac{L}{N}} \right){\tfrac{1}{1+D\left(\pi _{S} \right)}} \left(\left|B_{R_{j} } \left(\pi _{S} \right)\right|\right),$\newline where $L$ is the loss, $0<L\le N$ [Bell states per $\pi _{S} $].  \\ \hline 
$\zeta \left(\pi _{S} \right)$ & Entanglement swapping procedure at a given $\pi _{S} $. \\ \hline 
${\rm {\mathcal S}}_{I}^{\left(\pi _{S} \right)} \left(R_{j} \right)$ & Set of incoming densities of $R_{j} $ at $\pi _{S} $. \\ \hline 
$\left|{\rm {\mathcal S}}_{I}^{\left(\pi _{S} \right)} \left(R_{j} \right)\right|$ & Cardinality of the set ${\rm {\mathcal S}}_{I}^{\left(\pi _{S} \right)} \left(R_{j} \right)$ [Number of Bell states]. \\ \hline 
$\gamma $ & Noise coefficient, models the noise of the local quantum memory and the local operations, $0\le \gamma \le 1$. \\ \hline 
$N$ & Number of coincidence sets of $R_{j} $, and number of outgoing connections of $R_{j} $. \\ \hline 
$L$ & Number of losses, $0\le L\le N$. \\ \hline 
$M$ & Reduced number of swapped incoming and outgoing density matrices per $\pi _{S} $ at $L$ losses, $M=N-L$. \\ \hline 
$\gamma \left(\pi _{S} \right)$ & Noise at a given $\pi _{S} $. \\ \hline 
$Z_{R_{j} } \left(\pi _{S} \right)$ & A matrix of all coincidence set cardinalities for all input and output connections at $\pi _{S} $, defined as $Z_{R_{j} } \left(\pi _{S} \right)=Z_{R_{j} }^{\left(\pi _{S} \right)} \left(\left(R_{i} ,\sigma _{k} \right)\right)_{i\le N,k\le N} $. \\ \hline 
$\omega \left(\pi _{S} \right)$ & Weight coefficient, for a given entanglement swapping $\zeta \left(\pi _{S} \right)$ at a given $\pi _{S} $ is as\newline 
$\omega \left( {{\pi }_{S}} \right)=\sum\limits_{i,k}{{{\zeta }_{ik}}\left( {{\rho }_{A}},{{\sigma }_{k}} \right)Z_{{{R}_{j}}}^{\left( {{\pi }_{S}} \right)}\left( \left( {{R}_{i}},{{\sigma }_{k}} \right) \right)}=\left\langle \zeta \left( {{\pi }_{S}} \right),{{Z}_{{{R}_{j}}}}\left( {{\pi }_{S}} \right) \right\rangle ,$ \newline where $\left\langle \cdot \right\rangle $ is the inner product. \\ \hline 
$\omega ^{*} \left(\pi _{S} \right)$ & Maximized weight coefficient. \\ \hline 
$\zeta ^{*} \left(\pi _{S} \right)$ & Optimal entanglement swapping method at $\gamma \left(\pi _{S} \right)=0$ \\ \hline 
$\left|\chi \left(\pi _{S} \right)\right|$ & Norm, defined for an entanglement swapping $\chi \left(\pi _{S} \right)$. \\ \hline 
${\rm {\mathcal L}}\left(Z_{R_{j} } \left(\pi _{S} \right)\right)$ & Lyapunov function of $Z_{R_{j} } \left(\pi _{S} \right)$, as\newline ${\rm {\mathcal L}}\left(Z_{R_{j} } \left(\pi _{S} \right)\right)=\sum _{i,k}\left(Z_{R_{j} }^{\left(\pi _{S} \right)} \left(\left(R_{i} ,\sigma _{k} \right)\right)\right)^{2}  $. \\ \hline 
$C_{1} $, $C_{2} $ & Constants, $C_{1} >0$, $C_{2} >0$. \\ \hline 
$\Delta _{{\rm {\mathcal L}}} $ & Difference of Lyapunov functions ${\rm {\mathcal L}}\left(Z_{R_{j} } \left(\pi '_{S} \right)\right)$ and ${\rm {\mathcal L}}\left(Z_{R_{j} } \left(\pi _{S} \right)\right)$, where $\pi '_{S} $ is a next entanglement swapping period, defined as $\Delta _{{\rm {\mathcal L}}} ={\rm {\mathcal L}}\left(Z_{R_{j} } \left(\pi '_{S} \right)\right)-{\rm {\mathcal L}}\left(Z_{R_{j} } \left(\pi _{S} \right)\right)$. \\ \hline 
$\left|\bar{B}\left(R_{i} \left(\pi '_{S} \right),\sigma _{k} \right)\right|$ & A normalized number of arrival density matrices, $\left|\bar{B}\left(R_{i} \left(\pi '_{S} \right),\sigma _{k} \right)\right|\le 1,$ from $R_{i} $ for swapping with $\sigma _{k} $ at a next entanglement swapping period $\pi '_{S} $, defined as\newline $\left|\bar{B}\left(R_{i} \left(\pi '_{S} \right),\sigma _{k} \right)\right|={\tfrac{\left|B\left(R_{i} \left(\pi '_{S} \right),\sigma _{k} \right)\right|}{\left|B_{R_{j} } \left(\pi _{S} \right)\right|}} $,\newline where $\left|B_{R_{j} } \left(\pi _{S} \right)\right|=\sum _{i,k}\left|B\left(R_{i} \left(\pi _{S} \right),\sigma _{k} \right)\right| $ is a total number of incoming density matrices of $R_{j} $ from the $N$ quantum repeaters. \\ \hline 
${\mathbb{E}}\left(\left|\bar{B}\left(R_{i} \left(\pi _{S} \right),\sigma _{k} \right)\right|\right)$ & Expected normalized number of density matrices arrive from $R_{i} $ for swapping with $\sigma _{k} $ at $\pi _{S} $. \\ \hline 
$\alpha _{ik} $ & Parameter, defined as $\alpha _{ik} =Z_{R_{j} }^{\left(\pi _{S} \right)} \left(\left(R_{i} ,\sigma _{k} \right)\right)\bar{B}\left(R_{i} \left(\pi _{S} \right),\sigma _{k} \right)$. \\ \hline 
$\nu _{z} $ & A constant, $\nu _{z} \ge 0$. \\ \hline 
$C_{1} $ & Constant, $C_{1} =1-\sum _{z}\nu _{z}  $. \\ \hline 
$\left|Z_{R_{j} } \left(\pi _{S} \right)\right|$ & Cardinality of the coincidence sets at a given $\pi _{S} $, as\newline $\left|Z_{R_{j} } \left(\pi _{S} \right)\right|=\sum _{i,k}Z_{R_{j} }^{\left(\pi _{S} \right)} \left(\left(R_{i} ,\sigma _{k} \right)\right) =\left|{\rm {\mathcal S}}_{I} \left(R_{j} \right)\right|$. \\ \hline 
$\left|B_{R_{i} } \left(\pi _{S} \right)\right|$ & Total number of incoming density matrices in $R_{j} $ per a given $\pi _{S} $, as $\left|B_{R_{j} } \left(\pi _{S} \right)\right|=\sum _{i,k}\left|B\left(R_{i} \left(\pi _{S} \right),\sigma _{k} \right)\right| $. \\ \hline 
$\tilde{\pi }_{S} $ & An extended entanglement swapping period, defined as $\tilde{\pi }_{S} =\pi _{S} +D\left(\pi _{S} \right)$,\newline with ${\pi _{S} \mathord{\left/ {\vphantom {\pi _{S}  \left(\tilde{\pi }_{S} \right)}} \right. \kern-\nulldelimiterspace} \left(\tilde{\pi }_{S} \right)} \le 1$ [Number of $\pi _{S} $ periods]. \\ \hline 
$B'_{R_{j} } \left(\pi _{S} \right)$ & Outgoing entanglement rate per $\pi _{S} $ for a particular entanglement swapping set [Bell states per $\pi _{S} $]. \\ \hline 
$f\left(\cdot \right)$ & Sub-linear function. \\ \hline 
$\xi \left(\gamma \right)$ & Parameter, defined as $\xi \left(\gamma \right)={\tfrac{\left(N-L\right)}{2C_{1} }} f\left(\gamma \left(\pi _{S} \right)\right)$. \\ \hline 
$\beta $ & Parameter, defined as\newline $\beta =\sum _{i,k}\left(\left|\bar{B}\left(R_{i} \left(\pi _{S} \right),\sigma _{k} \right)\right|-\left|\bar{B}\left(R_{i} \left(\pi _{S} \right),\sigma _{k} \right)\right|^{2} \right) $, \newline where $\left|\bar{B}\left(R_{i} \left(\pi _{S} \right),\sigma _{k} \right)\right|$ refers to the normalized number of density matrices arrive from $R_{i} $ for swapping with $\sigma _{k} $ at $\pi _{S} $ as\newline $\left|\bar{B}\left(R_{i} \left(\pi _{S} \right),\sigma _{k} \right)\right|={\tfrac{\left|B\left(R_{i} \left(\pi _{S} \right),\sigma _{k} \right)\right|}{\sum _{i}\left|B\left(R_{i} \left(\pi _{S} \right),\sigma _{k} \right)\right| }} $. \\ \hline 
$P$ & Number of entanglement swapping periods . \\ \hline 
$D\left(\pi _{S} \right)$ & Delay per $\pi _{S} $ at a non-complete entanglement swapping set. \\ \hline 
$D^{*} \left(\pi _{S} \right)$ & Delay per $\pi _{S} $ at a complete entanglement swapping set. \\ \hline 
$\hat{D}\left(\pi _{S} \right)$ & Delay per $\pi _{S} $ at a perfect entanglement swapping set. \\ \hline 
$\left|Z_{R_{j} } \left(\pi _{S} \right)\right|$ & Cardinality of the coincidence sets at a given $\pi _{S} $, for a non-complete entanglement swapping set. \\ \hline 
$\left|Z_{R_{j} }^{*} \left(\pi _{S} \right)\right|$ & Cardinality of the coincidence sets at a given $\pi _{S} $, for a complete entanglement swapping set. \\ \hline 
$\left|\hat{Z}_{R_{j} } \left(\pi _{S} \right)\right|$ & Cardinality of the coincidence sets at a given $\pi _{S} $, for a perfect entanglement swapping set. \\ \hline 
$\pi _{S}^{*} $ & An extended entanglement swapping period, defined as $\pi _{S}^{*} =\left(1+h\right)\pi _{S} $, where $h>0$ [Number of $\pi _{S} $ periods].  \\ \hline
\end{longtable}
\end{center}
\end{document}